\documentclass[11pt]{article}

\usepackage[margin=1in]{geometry}
\usepackage[T1]{fontenc}
\usepackage[utf8]{inputenc}
\usepackage{lmodern}
\usepackage{microtype}
\usepackage{amsmath,amssymb,amsthm,mathtools,bm}
\usepackage{booktabs,tabularx,multirow,longtable}
\usepackage{graphicx}
\usepackage{float}

\usepackage{hyperref}

\graphicspath{{note_figures/}}

\newcommand{\R}{\mathbb{R}}
\newcommand{\C}{\mathbb{C}}
\newcommand{\E}{\mathbb{E}}
\newcommand{\Cov}{\operatorname{Cov}}
\newcommand{\diag}{\operatorname{diag}}
\newcommand{\CN}{\mathcal{CN}}
\newcommand{\Normal}{\mathcal{N}}
\newcommand{\Matern}{\mathcal{M}}
\newcommand{\ii}{\mathrm{i}}
\newcommand{\trans}{\mathsf{T}}
\newcommand{\herm}{\mathsf{H}}

\newtheorem{proposition}{Proposition}[section]
\theoremstyle{remark}
\newtheorem{remark}[proposition]{Remark}

\title{Temporal Fourier Likelihoods with Spatial Hilbert-Space\\
Gaussian Process Approximations}
\author{
  Xin Huang$^{1}$ \and
  Jia Li$^{2}$ \and
  Jun Yu$^{1}$
}
\date{}

\begin{document}
\maketitle

\begin{center}
\small
$^{1}$Department of Mathematics and Mathematical Statistics,
Ume{\aa} University, 901 87 Ume{\aa}, Sweden\\
\href{mailto:xin.huang@umu.se}
     {\nolinkurl{xin.huang@umu.se}},
\href{mailto:jun.yu@umu.se}
     {\nolinkurl{jun.yu@umu.se}}\\[0.7em]

$^{2}$Department of Statistics,
The Pennsylvania State University,
University Park, PA 16802, USA\\
\href{mailto:jol2@psu.edu}
     {\nolinkurl{jol2@psu.edu}}
\end{center}

\begin{abstract}
Reconstructing stationary space-time Gaussian processes at unobserved
locations is costly when many sites share regular temporal records. We develop a spectral likelihood combining a temporal discrete Fourier transform (DFT) with a Hilbert-space Gaussian process (HSGP) representation of frequency-specific spatial covariance. We derive the exact covariance of the finite-record DFT coefficients and use a Whittle likelihood that approximates distinct frequencies as independent spatial problems. At each temporal frequency, HSGP approximates spatial covariance by
evaluating the sampled spectral multiplier at retained Laplacian
eigenfrequencies.  
For models specified by a joint spectral density whose half-spectrum lacks a convenient closed form, this construction avoids repeated Fourier inversion. The fixed spatial basis also permits cached feature projections to be reused in likelihood fitting and held-site reconstruction, with approximation accuracy depending on domain extension and basis size. In a simulation study of such a model, HSGP achieved reconstruction accuracy comparable to a high order quadrature reference while reducing mean fitting time by $69\%$.  
Additional applications to wind-field reconstruction tasks examine the effects of basis rank, temporal record length, and separability, and demonstrate accurate inference on held-out test sites when sufficiently rich bases are used.
Taken together, the results indicate that computational savings are attainable when the spectral multiplier can be evaluated directly, numerical spatial inversion is costly, and an adequate basis has rank lower than the number of fitting sites.
\end{abstract}

\section{Introduction}

Environmental and geophysical variables are often recorded at fixed spatial
locations and at a common sequence of regularly spaced times.  Monitoring
networks, satellite products, and numerical reanalyses consequently produce
collections of spatially indexed time series rather than isolated temporal or
spatial observations \cite{stein2005regular,chengentonsun2021}.  Under a
Gaussian process model, the space-time covariance determines dependence among
these observations and supports likelihood-based estimation, reconstruction at
unobserved locations, and uncertainty quantification.  The same covariance
also creates the principal computational difficulty: increasing either the
number of spatial locations or the record length enlarges a jointly dependent
data vector whose direct treatment rapidly becomes impractical.  Statistical
methods for such data must therefore retain meaningful spatial-temporal interaction
while exploiting the regular temporal design and the geometry of the spatial
domain.

Space-time covariance models vary in the assumptions that are imposed on the spatial-temporal
interaction.  Separable models factor the covariance into purely spatial and
temporal components and can be computationally convenient, but they exclude
temporal frequency-dependent changes in the spatial correlation structure.  Nonseparable models relax this
restriction and may be constructed either in the covariance domain or through
a spectral representation.  Cressie and Huang \cite{cressiehuang1999}
developed stationary nonseparable classes by using spectral arguments and
Fourier inversion, whereas Gneiting \cite{gneiting2002} obtained broad
covariance-domain classes that do not depend on a closed-form Fourier inverse.
Stein \cite{stein2005} studied spectral constructions with regularity properties
designed to avoid undesirable space-time behavior.  Fuentes, Chen, and Davis
\cite{fuentes2008} used a joint spectrum to represent nonseparable dependence
and to extend the construction to spatially nonstationary settings.  These
developments establish covariance-domain and spectral-domain specification as
complementary modeling strategies.  Spectral specification is particularly
natural when dependence can be described through the allocation of variation
over spatial and temporal frequencies, but statistical computation ultimately
requires covariance quantities at the observed locations.

Regular temporal sampling provides a bridge between these two representations.
Stein \cite{stein2005regular} treated regular monitoring data as spatially
structured multiple time series and used temporal-frequency methods to
approximate their likelihood.  In a half-spectral representation, the temporal
lag is transformed while the spatial lag is retained. At each temporal
frequency, the resulting half-spectrum is a valid spatial covariance kernel
\cite{horrellstein2017}.  A temporal discrete Fourier transform can then replace
one large space-time calculation by a collection of spatial covariance
problems that are approximately decoupled under a Whittle likelihood
\cite{whittle1953}.  A practical difficulty remains when the model is specified
through a joint spectral density.  The corresponding half-spectrum is obtained
by an inverse Fourier transform over spatial frequency.  Selected model
families admit a closed-form transform, whereas a more general spectral model
may require numerical Fourier or, under isotropy, Hankel quadrature for many
spatial lags, temporal frequencies, and parameter values.  Numerical inversion
is a valid reference calculation, but its repeated use during parameter
optimization can offset the computational benefit of transforming the temporal
dimension.

This paper studies a different spatial calculation within the same temporal
frequency framework.  At a fixed temporal frequency, a stationary spatial
covariance defines a convolution operator on the whole space.  The spatial
Fourier transform diagonalizes this operator: plane waves are generalized
eigenfunctions, and the corresponding generalized eigenvalues are given by its
spectral multiplier.  For a radial multiplier, the covariance operator and the
spatial Laplacian are therefore diagonalized by the same whole-space Fourier
modes.  The \textit{Hilbert-space Gaussian process} (HSGP) construction transfers this
spectral weighting to the eigenfunctions of a Dirichlet Laplacian on an
extended bounded domain and retains a finite number of modes
\cite{solinsarkka2020}.  Applied after temporal sampling, this yields one \emph{shared spatial basis} whose weights vary with temporal frequency and covariance parameters.  When the sampled multiplier is directly evaluable, the method avoids computing the spatial inverse transform at every pairwise lag.  The
replacement is still an approximation: the bounded domain introduces a
boundary effect and the finite basis introduces truncation error.  Moreover,
the Laplacian eigenfunctions form an eigenbasis of the constructed bounded-domain
operator; they are not generally the Mercer eigenfunctions of the target covariance
kernel restricted to the observation domain.

% The contribution of this work is the integration and assessment of these
% ideas for regularly sampled space-time Gaussian processes.  First, we give a
% consistent derivation connecting a continuous joint spectrum, temporal
% sampling and aliasing, the sampled half-spectrum, and the covariance of
% finite-record Fourier coefficients.  Second, we use the spatial
% Fourier-multiplier representation to derive a frequency-specific HSGP
% covariance independently of any particular parametric spectrum.  Third, we
% develop likelihood and spatial-reconstruction calculations in which the fixed
% Laplacian basis permits reusable feature projections during parameter
% optimization.  Finally, two simulation studies and applications to ERA5,
% CERRA, and Pacific wind fields separate spatial approximation error, parameter
% estimation, reconstruction accuracy, and observed computational cost.  The
% results are deliberately interpreted as an accuracy-cost assessment rather
% than a universal dominance claim: the HSGP formulation is most useful when the
% multiplier is readily available, the corresponding spatial inversion is
% costly, and an adequate basis has rank appreciably below the number of fitting
% locations.  

The main contributions of this work are summarized as follows.
\begin{enumerate}
\item \textit{Spectral formulation.}
A consistent derivation connects the continuous joint spectrum, temporal
sampling and aliasing, the sampled half-spectrum, and finite-record discrete Fourier transform covariances, while distinguishing the exact finite-record, circular, and
Whittle formulations.

\item \textit{DFT-HSGP construction.}
Based on the spatial Fourier multiplier, a frequency-specific
HSGP approximation is developed for a general parametric spectrum, with the
effects of domain extension and basis truncation made explicit.

\item \textit{Likelihood and spatial inference.}
The fixed Laplacian basis permits reusable feature projections in likelihood
optimization and held-site spatial reconstruction, reducing computation when
an adequate basis has rank substantially below the number of fitting sites.

\item \textit{Numerical assessment.}
Two simulation studies and applications to ERA5, CERRA, and Pacific wind
fields datasets assess covariance approximation, parameter estimation, reconstruction accuracy, and computational cost. 
\end{enumerate}

The remainder of the paper is organized as follows.
Section~\ref{sec:half-spectral} introduces the continuous temporal
half-spectrum and its relation to the joint space-time spectrum.
Section~\ref{sec:hsgp} develops temporal sampling, the DFT working model, the
spatial-operator construction, and the resulting HSGP likelihood and
reconstruction formulas.  Section~\ref{sec:numerics} presents the benchmark
models, simulation studies, and wind-field applications.  The final sections
discuss the scope and limitations of the proposed construction and summarize
the main findings.  The appendices collect Fourier conventions, finite-record
covariance results, supporting operator arguments, and additional inference
calculations.
Throughout, $n$ denotes the number of spatial sites, $T$ the number of
regularly spaced observations at each site, and $\Delta t$ the sampling interval.
Indices $i,a,b$ refer to sites, $j$ to observation times, $k$ to DFT
frequencies, $m$ to temporal aliases, and $\bm j$ to a multi-index of spatial
Laplacian modes.  Bold lower-case symbols are vectors and bold upper-case
symbols are matrices.

\section{Continuous half-spectral representation}
\label{sec:half-spectral}

\subsection{Observation model and continuous-time half-spectrum}

Let $Z(\bm s,t)$ be a real stationary Gaussian field on $\R^d\times\R$,
with zero mean initially and covariance
$C_Z(\bm h,u;\bm\theta)
=\Cov\{Z(\bm s,t),Z(\bm s-\bm h,t-u)\}$ depending on spatial lag
$\bm h$ and temporal lag $u$.  At sites $\bm s_i$, $i=1,\ldots,n$, and equally spaced times
$t_j=j\Delta t$, $j=0,\ldots,T-1$, suppose there are regularly monitoring observations
\begin{equation}
 Y(\bm s_i,t_j)=Z(\bm s_i,t_j)+\mathcal{E}(\bm s_i,t_j),\qquad
\mathcal{E}(\bm s_i,t_j)\sim\Normal(0,v_{\mathrm{n}}).
 \label{eq:observation-model}
\end{equation}
The sampling interval is $\Delta t>0$.  The measurement errors are
independent and identically distributed across sites and times, independent
of $Z$, and have variance $v_{\mathrm{n}}>0$.  The covariance parameter vector
$\bm\theta\in\Theta$ includes any estimated noise variance, with $\Theta$
the admissible parameter set.  Appendix~\ref{app:inference} gives the
extension to a fitted mean and restricted likelihood.

Transforming only the temporal lag produces the half-spectrum: time is
represented by frequency while space remains represented by lag.  The term
half-spectral refers to the temporal transformation in
\eqref{eq:half-spectrum-definition}; both signs of temporal frequency are
retained.  We use the symmetric, unitary Fourier convention,
\begin{align}
 H_{\mathrm{c}}(\bm h,\tau;\bm\theta)
 &=\frac1{\sqrt{2\pi}}\int_\R
 C_Z(\bm h,u;\bm\theta)e^{-\ii\tau u}\,\mathrm{d}u,
 \label{eq:half-spectrum-definition}\\
 C_Z(\bm h,u;\bm\theta)
 &=\frac1{\sqrt{2\pi}}\int_\R
 H_{\mathrm{c}}(\bm h,\tau;\bm\theta)e^{\ii\tau u}\,\mathrm{d}\tau.
 \label{eq:half-spectrum-inversion}
\end{align}
Here $\tau\in\R$ is angular temporal frequency per unit time, and subscript $\mathrm{c}$
denotes continuous time.  Parameter dependence is written after a semicolon
for kernels and spectra, or in parentheses for quantities depending only
on parameters; the parameter argument may be suppressed when fixed and
unambiguous.  The
density-based model assumes the required spectral densities exist.
Stationarity alone supplies a spectral measure.  At temporal frequencies
where the spectral section is defined, $H_{\mathrm{c}}(\cdot,\tau;\bm\theta)$ is a
positive-semidefinite Hermitian spatial kernel, so the space-time model
becomes a family of spatial covariance kernels indexed by frequency
\cite{stein2005,horrellstein2017}.

\subsection{From the joint spectrum to the half-spectrum}
\label{sec:spatial-spectrum}

Suppose the model is specified by a nonnegative joint spectral density
$f(\bm\omega,\tau;\bm\theta)$, where $\bm\omega\in\R^d$ is angular
spatial frequency.  Under the unitary convention,
\begin{equation}
 C_Z(\bm h,u;\bm\theta)
 =\frac1{(2\pi)^{\frac{d+1}{2}}}
 \int_\R\int_{\R^d}e^{\ii(\bm\omega^\trans\bm h+\tau u)}
 f(\bm\omega,\tau;\bm\theta)\,\mathrm{d}\bm\omega\,\mathrm{d}\tau.
 \label{eq:joint-fourier-representation}
\end{equation}
Assume $f$ is integrable over joint frequency and has finite spatial
integrals at the temporal frequencies considered.  Real-valuedness requires
$f(-\bm\omega,-\tau;\bm\theta)=f(\bm\omega,\tau;\bm\theta)$.
The stated conditions ensure a finite-variance stationary field, with latent
variance $v_{\mathrm{f}}=C_Z(\bm0,0;\bm\theta)$.

The half-spectrum transforms the covariance in time only, whereas $f$
transforms it in both space and time.  Taking the temporal transform in
\eqref{eq:joint-fourier-representation} therefore leaves the spatial inverse
transform
\begin{equation}
 H_{\mathrm{c}}(\bm h,\tau;\bm\theta)
 =\frac1{(2\pi)^{\frac{d}{2}}}\int_{\R^d}
 e^{\ii\bm\omega^\trans\bm h}f(\bm\omega,\tau;\bm\theta)\,\mathrm{d}\bm\omega.
 \label{eq:joint-to-continuous-half}
\end{equation}
Equivalently, the joint spectral density $f(\bm\omega,\tau;\bm\theta)$ can be written as the unitary Fourier transform of the half-spectrum $H_{\mathrm{c}}(\bm{h},\tau;\bm\theta)$ in spatial lag $\bm{h}$ as
\begin{equation}
 f(\bm\omega,\tau;\bm\theta)=\frac1{(2\pi)^{\frac{d}{2}}}\int_{\R^d}
 e^{-\ii\bm\omega^\trans\bm h}H_{\mathrm{c}}(\bm{h},\tau;\bm\theta)\,\mathrm{d}\bm\omega=:\widehat H_{\mathrm{c}}(\bm\omega,\tau;\bm\theta).
 \label{eq:half-spatial-transform}
\end{equation}
Here $\widehat{H}_{\mathrm{c}}$ denotes the unitary Fourier transform of $H_\mathrm{c}$ with
respect to spatial lag $\bm h$.
Equations~\eqref{eq:joint-to-continuous-half} and
\eqref{eq:half-spatial-transform} give the spatial Fourier pair at each
temporal frequency:
\[
 f(\bm\omega,\tau;\bm\theta)
   =\widehat{H}_{\mathrm{c}}(\bm\omega,\tau;\bm\theta),
 \qquad
 H_{\mathrm{c}}(\bm h,\tau;\bm\theta)
   =\mathcal F_{\bm\omega}^{-1}\{f(\bm\omega,\tau;\bm\theta)\}.
\]
where $\mathcal F_{\bm\omega}^{-1}$ denotes
the inverse transform from spatial frequency $\bm\omega$ to spatial lag.
Thus $f$ is the spatial Fourier transform of the half-spectrum at fixed
temporal frequency, and $H_{\mathrm{c}}$ is the spatial inverse Fourier transform of
$f$.  Nonnegativity of $f(\cdot,\tau;\bm\theta)$ makes
$H_{\mathrm{c}}(\cdot,\tau;\bm\theta)$ a positive-semidefinite spatial kernel.  The
ordinary integrals above apply under the stated integrability assumptions;
Appendix~\ref{app:temporal} records the normalization and the corresponding
spectral-measure interpretation.

\section{Temporal DFT and spatial Hilbert-space approximation}
\label{sec:hsgp}

\subsection{Continuous-frequency spatial covariance operator}

The spatial Fourier pair in Section~\ref{sec:spatial-spectrum} also has an
operator interpretation that identifies the weights required by the HSGP
construction.  The following standard result fixes this interpretation under
the Fourier normalization used throughout the paper.

\begin{proposition}[Continuous-frequency spatial symbol]
\label{prop:continuous-spatial-symbol}
Fix $\tau\in\R$ and $\bm\theta\in\Theta$.  Suppose that
$H_{\mathrm{c}}(\cdot,\tau;\bm\theta)\in L^1(\R^d)$ and that its unitary
spatial Fourier transform is $f(\cdot,\tau;\bm\theta)$.  Define
\begin{equation}
 [\mathcal K^{\mathrm{c}}_{\tau,\bm\theta}\,g](\bm s)
 :=\int_{\R^d}H_{\mathrm{c}}(\bm s-\bm s',\tau;\bm\theta)
 g(\bm s')\,\mathrm d\bm s',
 \qquad g\in L^2(\R^d;\C).
 \label{eq:continuous-covariance-operator}
\end{equation}
Then $\mathcal K^{\mathrm{c}}_{\tau,\bm\theta}$ is a bounded spatial
Fourier-multiplier operator.  Its spectral multiplier (or symbol)
$S_{\mathrm{c}}(\cdot,\tau;\bm\theta)$ is characterized by
\label{eq:continuous-multiplier-action}
\begin{equation}
\widehat{\mathcal K^{\mathrm{c}}_{\tau,\bm\theta}g}(\bm\omega)
=
S_{\mathrm{c}}(\bm\omega,\tau;\bm\theta)
\widehat g(\bm\omega).
\label{eq:continuous-symbol-definition}
\end{equation}
Under the stated Fourier convention, this multiplier is related to the
joint spectral density by
\begin{equation}
S_{\mathrm{c}}(\bm\omega,\tau;\bm\theta)
=
(2\pi)^{\frac{d}{2}}
f(\bm\omega,\tau;\bm\theta).
\label{eq:continuous-symbol-density-relation}
\end{equation}
for $\bm\omega\in\R^d$.  If
$f(\cdot,\tau;\bm\theta)\in L^1(\R^d)$, then
\begin{equation}
 H_{\mathrm{c}}(\bm h,\tau;\bm\theta)
 =\frac1{(2\pi)^d}\int_{\R^d}
 e^{\ii\bm\omega^\trans\bm h}
 S_{\mathrm{c}}(\bm\omega,\tau;\bm\theta)\,\mathrm d\bm\omega.
 \label{eq:continuous-symbol-inversion}
\end{equation}
\end{proposition}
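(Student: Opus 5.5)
The plan is to reduce everything to the convolution theorem for the unitary Fourier transform $\widehat g(\bm\omega)=(2\pi)^{-d/2}\int_{\R^d}e^{-\ii\bm\omega^\trans\bm s}g(\bm s)\,\mathrm d\bm s$. Write $k=H_{\mathrm{c}}(\cdot,\tau;\bm\theta)\in L^1(\R^d)$. By Young's inequality, $\|k*g\|_{L^2}\le\|k\|_{L^1}\|g\|_{L^2}$. Hence $\mathcal K^{\mathrm{c}}_{\tau,\bm\theta}g=k*g$ is well defined on $L^2(\R^d;\C)$ and bounded, with operator norm at most $\|k\|_{L^1}$. This settles boundedness before any Fourier analysis.

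Next I establish the multiplier identity \eqref{eq:continuous-symbol-definition}, first for $g\in L^1\cap L^2$. For such $g$, $k*g\in L^1$, and Fubini (justified by $\int\!\int|k(\bm s-\bm s')||g(\bm s')|\,\mathrm d\bm s'\,\mathrm d\bm s=\|k\|_{L^1}\|g\|_{L^1}<\infty$) together with the substitution $\bm h=\bm s-\bm s'$ gives
\[
\widehat{k*g}(\bm\omega)=(2\pi)^{-d/2}\int k(\bm h)e^{-\ii\bm\omega^\trans\bm h}\,\mathrm d\bm h\int g(\bm s')e^{-\ii\bm\omega^\trans\bm s'}\,\mathrm d\bm s'=(2\pi)^{d/2}\,\widehat k(\bm\omega)\,\widehat g(\bm\omega).
\]
By hypothesis $\widehat k=f(\cdot,\tau;\bm\theta)$, which identifies $S_{\mathrm{c}}=(2\pi)^{d/2}f$ as in \eqref{eq:continuous-symbol-density-relation}. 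Since $k\in L^1$, Riemann--Lebesgue makes $\widehat k$ bounded and continuous, with $\|\widehat k\|_\infty\le(2\pi)^{-d/2}\|k\|_{L^1}$. So multiplication by $S_{\mathrm{c}}$ is a bounded operator on $L^2$.

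The main (though mild) obstacle is passing from $L^1\cap L^2$ to all of $L^2$, where $\widehat g$ is defined only through Plancherel. I will argue by density. The two operators $g\mapsto\widehat{k*g}$ and $g\mapsto S_{\mathrm{c}}\widehat g$ are both bounded $L^2\to L^2$: the first by Young plus unitarity of Plancherel, the second by the sup bound above. They agree on the dense subspace $L^1\cap L^2$, so they agree on $L^2$, giving \eqref{eq:continuous-symbol-definition} almost everywhere. This uniquely characterizes $S_{\mathrm{c}}$ as an $L^\infty$ function: testing against a $g$ with nonvanishing $\widehat g$, for example a Gaussian, recovers the multiplier.

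Finally, for \eqref{eq:continuous-symbol-inversion}, assume in addition $f(\cdot,\tau;\bm\theta)\in L^1$. Then $k\in L^1$ and $\widehat k\in L^1$, so the Fourier inversion theorem gives $k(\bm h)=(2\pi)^{-d/2}\int e^{\ii\bm\omega^\trans\bm h}f(\bm\omega,\tau;\bm\theta)\,\mathrm d\bm\omega$ for almost every $\bm h$, and for every $\bm h$ if $k$ is continuous. The right-hand side is continuous, so we take that representative, which matches \eqref{eq:joint-to-continuous-half}. Substituting $f=(2\pi)^{-d/2}S_{\mathrm{c}}$ yields the prefactor $(2\pi)^{-d}$ and completes the argument.
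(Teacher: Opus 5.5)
Your proposal is correct and follows essentially the same route as the paper: Young's inequality for boundedness, the unitary convolution theorem on $L^1\cap L^2$ to identify $S_{\mathrm c}=(2\pi)^{d/2}f$, extension to all of $L^2$ by density using $S_{\mathrm c}\in L^\infty$, and unitary Fourier inversion for the kernel formula. Your extra details (the explicit Fubini justification, the choice of continuous representative in the inversion step, and uniqueness of the multiplier) are sound refinements of points the paper leaves implicit. One small quibble: boundedness of $\widehat k$ comes from the trivial estimate $|\widehat k|\le(2\pi)^{-d/2}\|k\|_{L^1}$ rather than from Riemann--Lebesgue, which concerns decay at infinity.
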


% \begin{proof}
% Young's inequality shows that convolution by
% $H_{\mathrm{c}}(\cdot,\tau;\bm\theta)$ is bounded on $L^2(\R^d;\C)$.
% For $g\in L^1\cap L^2$, the convolution theorem for the unitary spatial
% Fourier transform and \eqref{eq:half-spatial-transform} give
% \[
%  \widehat{\mathcal K^{\mathrm{c}}_{\tau,\bm\theta}\,g}
%  =(2\pi)^{\frac{d}{2}}\widehat H_{\mathrm{c}}\,\widehat g
%  =(2\pi)^{\frac{d}{2}}f\,\widehat g.
% \]
% The identity extends to $L^2$ by density.  Equation
% \eqref{eq:continuous-symbol-inversion} thus follows from spatial Fourier
% inversion and \eqref{eq:continuous-multiplier-action}.
% \end{proof}

\begin{proof}
For notational convenience, we fix $\tau$ and $\bm\theta$ and suppress these arguments where no
ambiguity arises.\\
\emph{Boundedness.}
The operator in \eqref{eq:continuous-covariance-operator} is convolution
by $H_{\mathrm c}$.  Young's convolution inequality therefore gives
\[
 \left\|\mathcal K^{\mathrm c}_{\tau,\bm\theta}\,g\right\|_{L^2}
 =
 \left\|H_{\mathrm c}*g\right\|_{L^2}
 \leq
 \left\|H_{\mathrm c}\right\|_{L^1}\|g\|_{L^2}.
\]
Hence $\mathcal K^{\mathrm c}_{\tau,\bm\theta}$ is a bounded operator on
$L^2(\R^d;\C)$.\\
\emph{Multiplier identity.}
First let $g\in L^1(\R^d)\cap L^2(\R^d)$.  Under the unitary spatial
Fourier convention, the convolution theorem and
\eqref{eq:half-spatial-transform} give
\begin{align*}
 \widehat{\mathcal K^{\mathrm c}_{\tau,\bm\theta}\,g}
 &=
 \widehat{H_{\mathrm c}*g}\\
 &=
 (2\pi)^{d/2}\widehat H_{\mathrm c}\,\widehat g\\
 &=
 (2\pi)^{d/2}f\,\widehat g.
\end{align*}
Thus the Fourier multiplier of
$\mathcal K^{\mathrm c}_{\tau,\bm\theta}$ is
\[
 S_{\mathrm c}(\bm\omega,\tau;\bm\theta)
 =
 (2\pi)^{d/2}f(\bm\omega,\tau;\bm\theta),
\]
which proves \eqref{eq:continuous-multiplier-action} for
$g\in L^1\cap L^2$. Since $H_{\mathrm c}\in L^1$, its Fourier transform is bounded, so
$S_{\mathrm c}\in L^\infty$.  Both the convolution by $H_{\mathrm c}$ and the
multiplication by $S_{\mathrm c}$ are therefore bounded on $L^2$.
Since $L^1\cap L^2$ is dense in $L^2$, the multiplier identity extends
to every $g\in L^2$.\\
\emph{Kernel reconstruction.}
If $f(\cdot,\tau;\bm\theta)\in L^1(\R^d)$, unitary Fourier inversion and
$S_{\mathrm c}=(2\pi)^{d/2}f$ yield
\begin{align*}
 H_{\mathrm c}(\bm h,\tau;\bm\theta)=
 \frac{1}{(2\pi)^{d/2}}
 \int_{\R^d}
 e^{\ii\bm\omega^\trans\bm h}
 f(\bm\omega,\tau;\bm\theta)\,\mathrm d\bm\omega
 =
 \frac{1}{(2\pi)^d}
 \int_{\R^d}
 e^{\ii\bm\omega^\trans\bm h}
 S_{\mathrm c}(\bm\omega,\tau;\bm\theta)\,\mathrm d\bm\omega,
\end{align*}
which proves \eqref{eq:continuous-symbol-inversion}.
\end{proof}

% Thus $f$ is the unitary joint spectral density introduced by the
% space-time Fourier representation, whereas $S_{\mathrm{c}}$ is the same
% spatial-frequency information in the covariance-operator normalization.
% Because $f$ is nonnegative, $S_{\mathrm{c}}$ is nonnegative and
% $\mathcal K^{\mathrm{c}}_{\tau,\bm\theta}$ is a nonnegative self-adjoint
% operator.  On $\R^d$, spatial plane waves are its generalized eigenfunctions,
% with generalized eigenvalues given by $S_{\mathrm{c}}$.

Thus $f$ is the unitary joint spectral density introduced by the
space-time Fourier representation, whereas $S_{\mathrm c}$ contains the
same spatial-frequency information in the covariance-operator
normalization.  Because $f$ is nonnegative, $S_{\mathrm c}$ is
nonnegative and $\mathcal K^{\mathrm c}_{\tau,\bm\theta}$ is a
nonnegative self-adjoint operator.  On $\R^d$, spatial plane waves are
its generalized eigenfunctions, with generalized eigenvalues given by
$S_{\mathrm c}$.  The same plane waves diagonalize the spatial
Laplacian.  This shared Fourier diagonalization provides the
whole-space operator identity used to construct the HSGP approximation
in Section~\ref{sec:hsgp}; after temporal sampling, the relevant
multiplier is $S_{\mathrm d}$ rather than $S_{\mathrm c}$.

\subsection{Temporal sampling and the DFT working model}

For observations separated by $\Delta t$, the continuous covariance is
available only through the sampled sequence
$\{C_Z(\bm h,\ell\Delta t;\bm\theta):\ell\in\mathbb Z\}$.  Assuming this
sequence is absolutely summable for the spatial lags under consideration,
define its discrete-time half-spectrum by the Fourier-series pair
\begin{align}
 H_{\mathrm{d}}(\bm h,\lambda;\bm\theta)
 &=\frac1{\sqrt{2\pi}}\sum_{\ell\in\mathbb Z}
 C_Z(\bm h,\ell\Delta t;\bm\theta)e^{-\ii\lambda\ell},
 \label{eq:sampled-half-spectrum-definition}\\
 C_Z(\bm h,\ell\Delta t;\bm\theta)
 &=\frac1{\sqrt{2\pi}}\int_{-\pi}^{\pi}
 H_{\mathrm{d}}(\bm h,\lambda;\bm\theta)e^{\ii\lambda\ell}
 \,\mathrm d\lambda.
 \label{eq:sampled-half-spectrum-inversion-main}
\end{align}
Here $\lambda\in[-\pi,\pi]$ is angular frequency in radians per sample.
The subscript $\mathrm d$ denotes discrete time, and
$H_{\mathrm d}$ is extended periodically in $\lambda$.  Appendix
\ref{app:temporal} derives this normalization and the associated temporal
aliasing identity.

The spatial Fourier transform of the sampled half-spectrum is the temporally
aliased joint density
\begin{equation}
 f_{\mathrm{d}}(\bm\omega,\lambda;\bm\theta)
 :=\widehat H_{\mathrm{d}}(\bm\omega,\lambda;\bm\theta)
 =\frac1{\Delta t}\sum_{m\in\mathbb Z}
 f\left(\bm\omega,
 \frac{\lambda+2\pi m}{\Delta t};\bm\theta\right).
 \label{eq:sampled-joint-density}
\end{equation}
The derivation and conditions for interchanging the alias sum and spatial
transform are summarized in Appendix~\ref{app:temporal}.  The sampled
counterpart of Proposition~\ref{prop:continuous-spatial-symbol} now follows.

\begin{proposition}[Sampled-frequency spatial symbol]
\label{prop:spatial-symbol}
Fix $\lambda\in[-\pi,\pi]$ and suppose that
$H_{\mathrm{d}}(\cdot,\lambda;\bm\theta)\in L^1(\R^d)$.  Define
\begin{equation}
 [\mathcal K^{\mathrm{d}}_{\lambda,\bm\theta}\,g](\bm s)
 :=\int_{\R^d}H_{\mathrm{d}}(\bm s-\bm s',\lambda;\bm\theta)
 g(\bm s')\,\mathrm d\bm s',
 \qquad g\in L^2(\R^d;\C).
 \label{eq:main-covariance-operator}
\end{equation}
Then $\mathcal K^{\mathrm{d}}_{\lambda,\bm\theta}$ is a bounded spatial
Fourier-multiplier operator satisfying
\begin{equation}
 \widehat{\mathcal K^{\mathrm{d}}_{\lambda,\bm\theta}\,g}(\bm\omega)
 =S_{\mathrm{d}}(\bm\omega,\lambda;\bm\theta)\widehat g(\bm\omega),
 \label{eq:main-multiplier-action}
\end{equation}
where
\begin{equation}
 S_{\mathrm{d}}(\bm\omega,\lambda;\bm\theta)
 =(2\pi)^{\frac{d}{2}}f_{\mathrm{d}}(\bm\omega,\lambda;\bm\theta)
 =\frac1{\Delta t}\sum_{m\in\mathbb Z}
 S_{\mathrm{c}}\big(
 \bm\omega,\frac{\lambda+2\pi m}{\Delta t};\bm\theta
 \big).
 \label{eq:sampled-spectrum-definition}
\end{equation}
If $f_{\mathrm{d}}(\cdot,\lambda;\bm\theta)$ is integrable, then
\begin{equation}
 H_{\mathrm{d}}(\bm h,\lambda;\bm\theta)
 =\frac1{(2\pi)^d}\int_{\R^d}e^{\ii\bm\omega^\trans\bm h}
 S_{\mathrm{d}}(\bm\omega,\lambda;\bm\theta)\,\mathrm d\bm\omega.
 \label{eq:hsgp-fourier-convention}
\end{equation}
\end{proposition}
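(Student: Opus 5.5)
The plan is to repeat the argument of Proposition~\ref{prop:continuous-spatial-symbol} with $H_{\mathrm c}$ replaced by $H_{\mathrm d}(\cdot,\lambda;\bm\theta)$ and $f$ replaced by $f_{\mathrm d}$. The only new ingredient is the aliasing identity \eqref{eq:sampled-joint-density}. Throughout, fix $\lambda$ and $\bm\theta$ and suppress them.

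\emph{Boundedness.} Since $H_{\mathrm d}\in L^1(\R^d)$, Young's convolution inequality gives
\[
\|\mathcal K^{\mathrm d}_{\lambda,\bm\theta}g\|_{L^2}\le \|H_{\mathrm d}\|_{L^1}\|g\|_{L^2}.
\]
Hence the operator is bounded on $L^2(\R^d;\C)$.

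\emph{Multiplier identity.} First take $g\in L^1\cap L^2$. The unitary convolution theorem gives
\[
\widehat{H_{\mathrm d}*g}=(2\pi)^{d/2}\widehat H_{\mathrm d}\,\widehat g .
\]
By the definition in \eqref{eq:sampled-joint-density}, $\widehat H_{\mathrm d}=f_{\mathrm d}$. This yields \eqref{eq:main-multiplier-action} with $S_{\mathrm d}=(2\pi)^{d/2}f_{\mathrm d}$.

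Because $H_{\mathrm d}\in L^1$, the symbol $S_{\mathrm d}$ is bounded. So both sides of \eqref{eq:main-multiplier-action} are bounded operators on $L^2$, the Fourier transform being unitary by Plancherel. By density of $L^1\cap L^2$ in $L^2$, the identity extends to all $g\in L^2$.

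\emph{Alias-sum form of the symbol.} Multiply the alias sum in \eqref{eq:sampled-joint-density} by $(2\pi)^{d/2}$. Then use $S_{\mathrm c}=(2\pi)^{d/2}f$ from \eqref{eq:continuous-symbol-density-relation}, applied termwise at each temporal frequency $(\lambda+2\pi m)/\Delta t$. This gives
\[
S_{\mathrm d}(\bm\omega,\lambda)=\frac1{\Delta t}\sum_{m\in\mathbb Z}S_{\mathrm c}\Big(\bm\omega,\frac{\lambda+2\pi m}{\Delta t}\Big),
\]
which is \eqref{eq:sampled-spectrum-definition}.

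Note that this step only needs \eqref{eq:continuous-symbol-density-relation} as a pointwise identity between $S_{\mathrm c}$ and $f$. It does not require $H_{\mathrm c}(\cdot,\tau)\in L^1$ at every aliased frequency; one can simply take $(2\pi)^{d/2}f$ as the definition of $S_{\mathrm c}$ there.

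\emph{Kernel reconstruction.} If $f_{\mathrm d}\in L^1$, apply unitary Fourier inversion to $H_{\mathrm d}$:
\[
H_{\mathrm d}(\bm h)=(2\pi)^{-d/2}\int e^{\ii\bm\omega^\trans\bm h}f_{\mathrm d}(\bm\omega)\,\mathrm d\bm\omega .
\]
Substituting $f_{\mathrm d}=(2\pi)^{-d/2}S_{\mathrm d}$ gives \eqref{eq:hsgp-fourier-convention}. Inversion holds pointwise for continuous $H_{\mathrm d}$. Continuity follows because $H_{\mathrm d}$ is then the Fourier transform of an $L^1$ function, so equality everywhere holds for the continuous representative.

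\emph{Main obstacle.} The delicate point is the justification of \eqref{eq:sampled-joint-density} itself: interchanging the alias sum over $m$ with the spatial Fourier transform. The paper defers this to Appendix~\ref{app:temporal}, and I will invoke it as given.

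If I had to verify it, the route would be Tonelli plus dominated convergence. Nonnegativity of $f$ together with joint integrability gives
\[
\sum_m\int f\Big(\bm\omega,\frac{\lambda+2\pi m}{\Delta t}\Big)\,\mathrm d\bm\omega<\infty
\]
for almost every $\lambda$, by periodizing the $\tau$-integral. This justifies the interchange and shows that $f_{\mathrm d}(\cdot,\lambda)\in L^1$ for such $\lambda$. Consistency with the Fourier-series definition \eqref{eq:sampled-half-spectrum-definition} then follows from the Poisson summation formula in time.
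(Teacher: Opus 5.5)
Your proposal is correct and follows the paper's proof essentially step for step. The steps match: Young's inequality gives boundedness; the unitary convolution theorem on $L^1\cap L^2$ with $\widehat H_{\mathrm d}=f_{\mathrm d}$, together with boundedness of $S_{\mathrm d}$ and density, gives the multiplier identity on $L^2$; substituting $S_{\mathrm c}=(2\pi)^{d/2}f$ termwise gives the alias sum; and unitary Fourier inversion gives the kernel formula, with the sum--integral interchange deferred to Appendix~\ref{app:temporal} just as the paper does. Your side remarks are accurate refinements the paper leaves implicit: inversion holds everywhere only for the continuous representative, and the Tonelli/periodization argument justifies the interchange only for almost every $\lambda$.
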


% \begin{proof}
% The unitary convolution theorem and
% $\widehat H_{\mathrm{d}}=f_{\mathrm{d}}$ give
% \[
%  \widehat{\mathcal K^{\mathrm{d}}_{\lambda,\bm\theta}\,g}
%  =(2\pi)^{\frac{d}{2}}f_{\mathrm{d}}\widehat g.
% \]
% Substitution of \eqref{eq:sampled-joint-density} and
% \eqref{eq:continuous-multiplier-action} gives the alias representation in
% \eqref{eq:sampled-spectrum-definition}; spatial Fourier inversion gives
% \eqref{eq:hsgp-fourier-convention}.
% \end{proof}

\begin{proof}
We fix $\lambda$ and $\bm\theta$, and suppress these arguments without
ambiguity for notational convenience.\\
\emph{Boundedness.}
The operator in \eqref{eq:main-covariance-operator} is convolution by
$H_{\mathrm d}$.  Young's convolution inequality gives
\[
 \left\|\mathcal K^{\mathrm d}_{\lambda,\bm\theta}\,g\right\|_{L^2}
 =
 \left\|H_{\mathrm d}*g\right\|_{L^2}
 \leq
 \left\|H_{\mathrm d}\right\|_{L^1}\|g\|_{L^2}.
\]
Hence $\mathcal K^{\mathrm d}_{\lambda,\bm\theta}$ is bounded on
$L^2(\R^d;\C)$.\\
\emph{Multiplier identity.}
First let $g\in L^1(\R^d)\cap L^2(\R^d)$.  By the convolution theorem
for the unitary spatial Fourier transform,
\begin{align*}
 \widehat{\mathcal K^{\mathrm d}_{\lambda,\bm\theta}\,g}=
 \widehat{H_{\mathrm d}*g}=
 (2\pi)^{d/2}\widehat H_{\mathrm d}\,\widehat g=
 (2\pi)^{d/2}f_{\mathrm d}\,\widehat g,
\end{align*}
where the final equality uses
\eqref{eq:sampled-joint-density}.  It follows that
\[
 S_{\mathrm d}(\bm\omega,\lambda;\bm\theta)
 =
 (2\pi)^{d/2}f_{\mathrm d}(\bm\omega,\lambda;\bm\theta)
\]
for almost every $\bm\omega$. Because $H_{\mathrm d}\in L^1$, its Fourier transform is bounded, and
therefore $S_{\mathrm d}\in L^\infty$.  Convolution by $H_{\mathrm d}$
and multiplication by $S_{\mathrm d}$ are consequently both bounded
on $L^2$.  Since $L^1\cap L^2$ is dense in $L^2$, the multiplier
identity extends to every $g\in L^2$.\\
\emph{Temporal aliasing.}
Substituting \eqref{eq:sampled-joint-density} into the preceding
identity and using
$S_{\mathrm c}=(2\pi)^{d/2}f$ gives
\begin{align*}
 S_{\mathrm d}(\bm\omega,\lambda;\bm\theta)
 &=
 \frac{(2\pi)^{d/2}}{\Delta t}
 \sum_{m\in\mathbb Z}
 f\left(
   \bm\omega,
   \frac{\lambda+2\pi m}{\Delta t};
   \bm\theta
 \right)\\
 &=
 \frac{1}{\Delta t}
 \sum_{m\in\mathbb Z}
 S_{\mathrm c}\left(
   \bm\omega,
   \frac{\lambda+2\pi m}{\Delta t};
   \bm\theta
 \right),
\end{align*}
which proves \eqref{eq:sampled-spectrum-definition}.\\
\emph{Kernel reconstruction.}
If $f_{\mathrm d}(\cdot,\lambda;\bm\theta)\in L^1(\R^d)$, unitary
Fourier inversion and
$f_{\mathrm d}=(2\pi)^{-d/2}S_{\mathrm d}$ yield
\begin{align*}
 H_{\mathrm d}(\bm h,\lambda;\bm\theta)
 &=
 \frac{1}{(2\pi)^{d/2}}
 \int_{\R^d}
 e^{\ii\bm\omega^\trans\bm h}
 f_{\mathrm d}(\bm\omega,\lambda;\bm\theta)
 \,\mathrm d\bm\omega\\
 &=
 \frac{1}{(2\pi)^d}
 \int_{\R^d}
 e^{\ii\ flyenade\bm\omega^\trans\bm h}
 S_{\mathrm d}(\bm\omega,\lambda;\bm\theta)
 \,\mathrm d\bm\omega,
\end{align*}
which proves \eqref{eq:hsgp-fourier-convention}.
\end{proof}

Propositions~\ref{prop:continuous-spatial-symbol} and
\ref{prop:spatial-symbol} separate two operations.  Temporal sampling aliases
the continuous-frequency symbol over temporal frequency, whereas the spatial
Fourier-multiplier structure is unchanged.  The values of
$S_{\mathrm{d}}$ will become the frequency-specific HSGP weights.

Collect observations into
$\bm y_j=(Y(\bm s_1,t_j),\ldots,Y(\bm s_n,t_j))^\trans$ and define
the unitary temporal DFT
\begin{equation}
 \bm d_k=\frac1{\sqrt T}\sum_{j=0}^{T-1}\bm y_j e^{-\ii\lambda_kj},
 \qquad \lambda_k=2\pi k/T,\quad k=0,\ldots,T-1.
 \label{eq:dft-coefficient}
\end{equation}
Frequencies are interpreted modulo $2\pi$.  For an ordinary finite record,
the coefficient vectors have spectral-window-averaged covariances and are
generally dependent across frequencies.  The Whittle working model neglects
that cross-frequency dependence, treats the retained coefficient vectors as
mutually independent, and assigns the sampled-spectrum marginal law
\begin{align}
 \bm d_k&\sim\CN(\bm0,\bm F_k(\bm\theta)),\notag\\
 [\bm F_k(\bm\theta)]_{ab}
 &=\sqrt{2\pi}\,H_{\mathrm{d}}(\bm s_a-\bm s_b,\lambda_k;\bm\theta)
   +v_{\mathrm{n}}(\bm\theta)\mathbb I(a=b).
 \label{eq:Fk-final}
\end{align}
The factor $\sqrt{2\pi}$ converts the symmetrically normalized half-spectrum
to DFT covariance scale; the noise contribution remains $v_{\mathrm{n}}\bm I_n$.
Here $\mathbb I$ is the indicator function,
$\bm I_n$ is the identity matrix, and $\CN(\bm0,\bm F)$ is a proper
complex Gaussian law with covariance $\E[\bm d\bm d^\herm]=\bm F$ and zero
pseudo-covariance $\E[\bm d\bm d^\trans]$; $\herm$ denotes conjugate transpose.
The proper complex Gaussian law in \eqref{eq:Fk-final} applies to positive
interior frequencies.  Zero frequency and the
even-$T$ Nyquist frequency are real Gaussian, and
$\bm d_{T-k}=\overline{\bm d_k}$ for real data.

The unadorned $\bm F_k$ uses the target spatial half-spectrum, evaluated
analytically or to a checked numerical tolerance.  The covariance
$\bm F_k$ does not identify an
exact finite-record temporal likelihood.  We reserve a tilde for its HSGP
spatial approximation.  Appendix~\ref{app:temporal} derives the
finite-record covariance and an explicit circular model for which decoupling
is exact.  For an ordinary record, frequencywise decoupling remains a Whittle approximation
\cite{whittle1953}.  HSGP, introduced next, approximates the spatial
covariance independently of the Whittle approximation.

\subsection{Frequency-specific HSGP covariance}

Proposition~\ref{prop:spatial-symbol} identifies the sampled multiplier
$S_{\mathrm d}$ that supplies the frequency-specific weights in the HSGP
construction.  This connection allows the spatial approximation to be
developed directly from the sampled spectrum, without first requiring a
closed-form expression for $H_{\mathrm d}$.

\begin{remark}[Covariance-domain and spectral-domain specification]
\label{rem:covariance-spectral-specification}
The operator formulation permits a model to be constructed either through
its covariance structure or through its spectral representation.  The construction of Fuentes, Chen, and Davis
\cite[Eq.~(3)]{fuentes2008}, for example, begins with a joint spectral density
$f$.  Temporal sampling produces $f_{\mathrm{d}}$ through
\eqref{eq:sampled-joint-density}, and
Proposition~\ref{prop:spatial-symbol} then supplies the multiplier $S_{\mathrm{d}}$.
An exact spatial covariance engine must evaluate the inverse transform
\eqref{eq:hsgp-fourier-convention}; when no convenient analytic expression is
available, that step requires numerical integration.  The HSGP engine instead
evaluates $S_{\mathrm{d}}$ only at retained Laplacian eigenfrequencies.  The
direct-spectral experiment in Section~\ref{sec:direct-spectral-experiment}
illustrates the distinction using Gauss-Legendre Hankel quadrature for the
numerical reference covariance.  Avoiding that spatial quadrature is a
potential computational advantage of spectral-domain HSGP fitting. It does
not remove temporal aliasing, bounded-domain error, rank truncation, or
parameter-optimization cost. Conversely, a discrete-time model may be specified directly by a nonnegative symbol $S_{\mathrm{d}}$ that is integrable in spatial frequency, periodic in $\lambda$,
and compatible with the conjugate symmetry of a real process.  Equation
\eqref{eq:sampled-spectrum-definition} then provides the corresponding density
normalization $f_{\mathrm{d}}$.
\end{remark}

% The connection from the whole-space multiplier to the HSGP basis follows
% from the spectral calculus of the Laplacian.  Assume first that the sampled
% symbol is radial,
% \begin{equation}
%  S_{\mathrm{d}}(\bm\omega,\lambda;\bm\theta)
%  =\zeta_{\mathrm{d}}(\|\bm\omega\|,\lambda;\bm\theta),
%  \label{eq:radial-profile}
% \end{equation}
% where $\zeta_{\mathrm{d}}(r,\lambda;\bm\theta)$, $r\ge0$, denotes the scalar radial
% profile.  The spatial Fourier transform maps $-\Delta_{\bm s}$ to
% multiplication by $\|\bm\omega\|^2$.  Combining that identity with
% \eqref{eq:main-multiplier-action} gives the operator representation
% \begin{equation}
%  \mathcal K^{\mathrm d}_{\lambda,\bm\theta}
%  =\zeta_{\mathrm{d}}(\sqrt{-\Delta_{\bm s}},\lambda;\bm\theta).
%  \label{eq:main-functional-calculus}
% \end{equation}
% The radial assumption permits the scalar functional calculus in
% \eqref{eq:main-functional-calculus}.  Appendix~\ref{app:spatial} gives the
% corresponding construction for fixed geometric anisotropy.

The connection between the whole-space multiplier and the HSGP basis
follows from the fact that the covariance operator and the spatial
Laplacian are diagonalized by the same Fourier modes.  Assume first that
the sampled symbol is radial,
\begin{equation}
S_{\mathrm d}(\bm\omega,\lambda;\bm\theta)
=
\zeta_{\mathrm d}(\|\bm\omega\|,\lambda;\bm\theta),
\label{eq:radial-profile}
\end{equation}
where $\zeta_{\mathrm d}(r,\lambda;\bm\theta)$, $r\geq0$, denotes its
scalar radial profile.  For sufficiently regular $g$, the spatial
Fourier transform satisfies
\[
\widehat{(-\Delta_{\bm s})g}(\bm\omega)
=
\|\bm\omega\|^2\,\widehat g(\bm\omega).
\]
Consequently, the positive square root
$\sqrt{-\Delta_{\bm s}}$ has Fourier multiplier $\|\bm\omega\|$.
The spectral functional calculus therefore gives
\[
\mathcal{F}_{\bm s}\{\zeta_{\mathrm d}
  (\sqrt{-\Delta_{\bm s}},\lambda;\bm\theta)g
\}(\bm\omega)
=
\zeta_{\mathrm d}
  (\|\bm\omega\|,\lambda;\bm\theta)\widehat g(\bm\omega)
=
S_{\mathrm d}(\bm\omega,\lambda;\bm\theta)\widehat g(\bm\omega)\,,
\]
where $\mathcal{F}_{\bm s}$ denotes the Fourier transform from spatial coordinates to spatial frequencies. By \eqref{eq:main-multiplier-action}, the right-hand side is also
$\widehat{\mathcal K^{\mathrm d}_{\lambda,\bm\theta}g}(\bm\omega)$.
Since the spatial Fourier transform is injective, the two operators
coincide:
\begin{equation}
\mathcal K^{\mathrm d}_{\lambda,\bm\theta}
=
\zeta_{\mathrm d}
  (\sqrt{-\Delta_{\bm s}},\lambda;\bm\theta).
\label{eq:main-functional-calculus}
\end{equation}
Equation~\eqref{eq:main-functional-calculus} is a whole-space operator
identity.  The HSGP construction \cite{solinsarkka2020} transfers the same scalar function $\zeta_{\mathrm d}$ to the spectrum of a Dirichlet Laplacian on an
extended bounded domain.  Replacing the whole-space operator by this
bounded-domain operator introduces the boundary approximation, and
retaining only finitely many Laplacian eigenfunctions introduces the
rank-truncation approximation. The radial assumption permits the scalar functional calculus in
\eqref{eq:main-functional-calculus}.  Appendix~\ref{app:spatial} gives the
corresponding construction for fixed geometric anisotropy.

Let $c_r$ and $a_r>0$ be the centre and half-width of a fixed spatial design
box in coordinate $r=1,\ldots,d$, containing all observation and prediction
sites.  Choose a boundary factor $c_B>1$ and define the extended domain
\begin{equation}
 \Omega=\prod_{r=1}^d[c_r-L_r,c_r+L_r],\qquad L_r=c_Ba_r.
 \label{eq:extended-domain}
\end{equation}
Let $\Delta_{\bm s}=\sum_{r=1}^d\frac{\partial^2}{\partial s_r^2}$ denote the
spatial Laplacian.  Its Dirichlet eigenfunctions on $\Omega$ satisfy
\begin{equation}
 -\Delta_{\bm s}\phi_{\bm j}=\Lambda_{\bm j}\phi_{\bm j}
 \quad\text{in }\Omega,\qquad
 \phi_{\bm j}=0\quad\text{on }\partial\Omega,
 \label{eq:main-dirichlet-basis}
\end{equation}
where $\bm j\in\mathbb N^d$ is a multi-index with positive integer
components and $\Lambda_{\bm j}>0$ is the associated eigenvalue.  The real
functions $\phi_{\bm j}$ are normalized to form an orthonormal basis of
$L^2(\Omega)$ with respect to spatial Lebesgue measure.  Their radial
eigenfrequencies are $\omega_{\bm j}=\sqrt{\Lambda_{\bm j}}$.
The explicit rectangular basis and its normalization are derived in
Appendix~\ref{app:spatial}.  Write $-\Delta_\Omega$ for the nonnegative
Dirichlet Laplacian defined by \eqref{eq:main-dirichlet-basis}.

The operator $-\Delta_\Omega$ is self-adjoint on $L^2(\Omega)$ and has a
discrete orthonormal eigenbasis.  Its spectral calculus therefore defines a
bounded-domain covariance operator by applying the same radial function that
appears in \eqref{eq:main-functional-calculus} to its eigenvalues:
\[
 \zeta_{\mathrm{d}}(\sqrt{-\Delta_{\Omega}},\lambda;\bm\theta)\phi_{\bm j}
 =\zeta_{\mathrm{d}}(\sqrt{\Lambda_{\bm j}},\lambda;\bm\theta)\phi_{\bm j}.
\]
The Laplacian eigenfunctions are fixed by the geometry and boundary
conditions, whereas the eigenvalues of the constructed covariance operator
are the symbol evaluations
$\zeta_{\mathrm{d}}(\sqrt{\Lambda_{\bm j}},\lambda;\bm\theta)$.  The separation between
the geometric eigenfunctions and the parameter-dependent symbol values is the
Hilbert-space step in HSGP: the infinite-dimensional covariance operator on
$L^2(\Omega)$ is represented in an orthonormal basis and then restricted to a
finite-dimensional subspace \cite[Secs.~2.2-2.3]{solinsarkka2020}.

Fix a mode set $\mathcal J\subset\mathbb N^d$ with $M=|\mathcal J|$.  The
orthogonal projection onto
$\operatorname{span}\{\phi_{\bm j}:\bm j\in\mathcal J\}$ gives the finite-rank
kernel
\begin{equation}
 H_{\mathrm{d}}(\bm s-\bm s',\lambda;\bm\theta)\approx\sum_{\bm j\in\mathcal J}
 \zeta_{\mathrm{d}}(\omega_{\bm j},\lambda;\bm\theta)
 \phi_{\bm j}(\bm s)\phi_{\bm j}(\bm s') =:
 \widetilde H_{\mathrm{d},\Omega,M}(\bm s,\bm s',\lambda;\bm\theta).
 \label{eq:hsgp-general-expansion}
\end{equation}
Equivalently, if $\xi_{\bm j,\lambda}$ are independent standard Gaussian
coefficients (proper complex Gaussian coefficients at positive interior DFT
frequencies), the finite expansion
\[
 \widetilde Z_\lambda(\bm s)
 =\sum_{\bm j\in\mathcal J}
 \zeta_{\mathrm{d}}(\omega_{\bm j},\lambda;\bm\theta)^{\frac{1}{2}}
 \xi_{\bm j,\lambda}\phi_{\bm j}(\bm s)
\]
has covariance kernel \eqref{eq:hsgp-general-expansion}.  The representation
approximates the Gaussian process through a finite set of coordinates in
$L^2(\Omega)$, rather than only approximating entries of a covariance matrix.
The coefficients in \eqref{eq:hsgp-general-expansion} are symbol evaluations
inherited from the whole-space operator.  They are not Fourier-quadrature
weights and are not, in general, the Mercer eigenvalues of the target kernel
restricted directly to $\Omega$.  Appendix~\ref{app:spatial} states the
operator-domain and Mercer distinctions more formally.

Nonnegative weights make the finite kernel positive semidefinite.  The
boundary conditions and finite mode set generally cannot preserve exact spatial
stationarity, and its marginal variance can depend on location.  Increasing
$M$ at fixed $\Omega$ addresses truncation but does not remove boundary
error.  Approximation to the whole-space kernel requires suitable domain
extension, increasing frequency resolution, and regularity and tail
conditions \cite[Sec.~4]{solinsarkka2020}; Appendix~\ref{app:spatial} states
the corresponding convergence conditions.  The finite expansion is not an exact spatial
inverse Fourier transform.

To evaluate the approximation at the $n$ sites, define the $n\times M$
feature matrix $\bm\Phi$ and the $M\times M$ diagonal weight matrix by
\begin{equation}
 \Phi_{i\bm j}=\phi_{\bm j}(\bm s_i),\qquad
 \bm W_k(\bm\theta)=\sqrt{2\pi}\,
 \diag\{\zeta_{\mathrm{d}}(\omega_{\bm j},\lambda_k;\bm\theta):\bm j\in\mathcal J\}.
 \label{eq:generic-hsgp-weights}
\end{equation}
The factor $\sqrt{2\pi}$ converts half-spectrum weights to data-DFT
covariance weights, as in \eqref{eq:Fk-final}.  The approximate observation
covariance is therefore
\begin{equation}
 \widetilde{\bm F}_k(\bm\theta)
 =\bm\Phi\bm W_k(\bm\theta)\bm\Phi^\trans+v_{\mathrm{n}}(\bm\theta)\bm I_n.
 \label{eq:hsgp-frequency-covariance}
\end{equation}
The domain, coordinates, and mode set are fixed during optimization, so
$\bm\Phi$ is shared across frequencies and parameter values.  Only the
spectral weights and any estimated noise variance change.  The weights
serve as scaled evaluations of the specified spatial spectrum, and the eigenfunction
normalization accounts for domain size, without Fourier quadrature-cell
weights.  The fixed basis enables the cached likelihood calculation in
Section~\ref{sec:cached-calculation}.

\subsection{Likelihood and spatial reconstruction}

The same frequency convention must be used by both spatial engines.  Define
the index sets $\mathcal K_+=\{1,\ldots,\lfloor(T-1)/2\rfloor\}$ and
$\mathcal K_R=\{0\}$, adding $T/2$ to $\mathcal K_R$ when $T$ is even.
For positive-definite target covariances $\bm F_k(\bm\theta)$ and a known
zero mean, suppress the parameter argument inside the following expression.
Then the one-sided negative log likelihood under the Whittle working model is
\begin{align}
 \mathcal L(\bm\theta)
 =\sum_{k\in\mathcal K_+}
 \{\log|\bm F_k|+\bm d_k^\herm\bm F_k^{-1}\bm d_k+n\log\pi\}
 +\frac12\sum_{k\in\mathcal K_R}
 \{\log|\bm F_k|+\bm d_k^\trans\bm F_k^{-1}\bm d_k+n\log(2\pi)\}.
 \label{eq:whittle-likelihood}
\end{align}
The expression in \eqref{eq:whittle-likelihood} is the negative logarithm
of the product density for the retained real and complex coefficients.
Relative to the negative log likelihood for the full real time-series
vector, it differs only by a
parameter-independent change-of-variables constant.  The likelihood is exact under the
circular working model and approximate for an ordinary record.

Define $\widetilde{\mathcal L}(\bm\theta)$ by replacing every $\bm F_k$ in
\eqref{eq:whittle-likelihood} with HSGP approximation $\widetilde{\bm F}_k$.  The fitted
parameters are
\begin{equation}
 \widehat{\bm\theta}=\arg\min_{\bm\theta\in\Theta}\mathcal L(\bm\theta),
 \qquad
 \widetilde{\bm\theta}=\arg\min_{\bm\theta\in\Theta}\widetilde{\mathcal L}(\bm\theta).
 \label{eq:exact-hsgp-estimators}
\end{equation}
Here $\widetilde{\bm\theta}$ is the estimate obtained by fitting the HSGP
covariance; the tilde does not assert closeness to $\widehat{\bm\theta}$.
The estimators in \eqref{eq:exact-hsgp-estimators} fit the target and
approximate spatial covariances under the same temporal working assumption.
The exact-spatial estimate is a benchmark, not a
requirement for fitting HSGP.  Comparisons must share the data, frequency
convention, mean treatment, and parameter constraints, with optimization
settings reported separately. The objective in \eqref{eq:whittle-likelihood} assumes a known zero mean.
When a mean regression is included, its transformed coefficients may be
profiled out, and a Whittle restricted objective may be used instead.
Appendix~\ref{app:inference} gives both extensions.

Let
$\mathcal S_\ast=\{\bm s_1^\ast,\ldots,\bm s_{n_{\ast}}^\ast\}$
denote a collection of spatial reconstruction locations that are disjoint
from the $n$ fitting locations.  The asterisk identifies an unobserved spatial
input.  Let $\bm d_{\ast,k}$ denote the vector of DFT coefficients at
$\mathcal S_\ast$.  For the target model, define
\begin{equation}
 \bm F_{\ast,k}=\Cov(\bm d_{\ast,k},\bm d_k),
 \qquad
 \bm F_{\ast\ast,k}=\operatorname{Var}(\bm d_{\ast,k}),
 \label{eq:reconstruction-covariance-blocks}
\end{equation}
where these covariance blocks and the fitting-site covariance $\bm F_k$ are
evaluated at the target-model estimate $\widehat{\bm\theta}$.

Under the frequency-wise Gaussian working model, conditioning at an interior
positive frequency gives
\begin{equation}
 \bm d_{\ast,k}\mid\bm d_k,\widehat{\bm\theta}
 \sim\CN\!\left(\widehat{\bm d}_{\ast,k},\bm V_{\ast,k}\right),
 \label{eq:frequency-prediction-law}
\end{equation}
where
\begin{align}
 \widehat{\bm d}_{\ast,k}
 &=\bm F_{\ast,k}\bm F_k^{-1}\bm d_k,
 \label{eq:frequency-prediction}\\
 \bm V_{\ast,k}
 &=\bm F_{\ast\ast,k}
   -\bm F_{\ast,k}\bm F_k^{-1}\bm F_{\ast,k}^{\herm}.
 \label{eq:frequency-prediction-variance}
\end{align}
The corresponding conditional law at the zero frequency and, when present,
the Nyquist frequency is real Gaussian.  For an ordinary finite record,
these expressions define frequency-wise Whittle inference; they are exact
under the associated circular Gaussian model.

For the HSGP fit, let $\bm\Phi$ and $\bm\Phi_\ast$ denote the retained
Laplacian basis functions evaluated at the fitting and reconstruction
locations, respectively.  At the fitted HSGP parameter vector
$\widetilde{\bm\theta}$, the required covariance blocks are
\begin{align}
 \widetilde{\bm F}_k
 &=\bm\Phi\bm W_k(\widetilde{\bm\theta})\bm\Phi^{\trans}
   +\widetilde v_{\mathrm n}\bm I_n,\notag\\
 \widetilde{\bm F}_{\ast,k}
 &=\bm\Phi_\ast\bm W_k(\widetilde{\bm\theta})\bm\Phi^{\trans},
 \label{eq:hsgp-reconstruction-covariance-blocks}\\
 \widetilde{\bm F}_{\ast\ast,k}
 &=\bm\Phi_\ast\bm W_k(\widetilde{\bm\theta})\bm\Phi_\ast^{\trans}
   +\widetilde v_{\mathrm n}\bm I_{n_{\ast}} .\notag
\end{align}
Consequently, the HSGP conditional moments are
\begin{align}
 \widetilde{\bm d}_{\ast,k}
 &=\widetilde{\bm F}_{\ast,k}\widetilde{\bm F}_k^{-1}\bm d_k,
 \label{eq:hsgp-frequency-prediction}\\
 \widetilde{\bm V}_{\ast,k}
 &=\widetilde{\bm F}_{\ast\ast,k}
   -\widetilde{\bm F}_{\ast,k}\widetilde{\bm F}_k^{-1}
    \widetilde{\bm F}_{\ast,k}^{\herm}.
 \label{eq:hsgp-frequency-prediction-variance}
\end{align}
For latent-field uncertainty, the term
$\widetilde v_{\mathrm n}\bm I_{n_{\ast}}$ is omitted from
$\widetilde{\bm F}_{\ast\ast,k}$, whereas the noisy fitting-site covariance
$\widetilde{\bm F}_k$ is unchanged.  Because the fitting and reconstruction
locations are disjoint, measurement noise contributes nothing to the
cross-covariance block.  The latent-field and noisy-observation conditional
means therefore coincide, although their conditional variances differ.

For a real-valued process, conjugate symmetry determines the conditional
means at the redundant negative frequencies:
\begin{equation}
 \widehat{\bm d}_{\ast,T-k}
 =\overline{\widehat{\bm d}_{\ast,k}}.
 \label{eq:reconstruction-conjugate-symmetry}
\end{equation}
After completion of the full frequency sequence, the reconstructed spatial
field at time index $j$ is
\begin{equation}
 \widehat{\bm y}_{\ast,j}
 =\frac{1}{\sqrt T}\sum_{k=0}^{T-1}
   \widehat{\bm d}_{\ast,k}\,e^{\ii\lambda_kj},
 \qquad j=0,\ldots,T-1.
 \label{eq:time-domain-reconstruction}
\end{equation}
The inverse transformation of $\widetilde{\bm d}_{\ast,k}$ gives the HSGP
reconstruction.  When a mean model is fitted, conditioning is applied to the
residual DFT coefficients and the fitted mean is subsequently restored.  The
resulting uncertainty statements are conditional on the estimated covariance
and mean parameters.  Appendix~\ref{app:inference} gives the corresponding
time-domain variance reconstruction.

\subsection{Cached calculation and computational cost}
\label{sec:cached-calculation}

The fixed HSGP basis allows geometry-dependent and data-adapted calculations to be
reused throughout parameter optimization.  For the $n\times M$ training
feature matrix $\bm\Phi$, we can cache
\begin{equation}
\bm G=\bm\Phi^\trans\bm\Phi,\qquad
\bm b_k=\bm\Phi^\trans\bm d_k,\qquad
u_k=\bm d_k^\herm\bm d_k.
\label{eq:cached-gram}
\end{equation}
Then the Gram matrix $\bm G$ can be shared across all temporal frequencies, while
$\bm b_k$ and $u_k$ are computed once for each DFT coefficient vector.
The determinant lemma and Woodbury identity then reduce the likelihood
calculation to feature space.  
%Appendix~\ref{app:inference} derives the inverse-weight formulation and a square-root formulation that also permits zero spectral weights.  
Matrix inverses are implemented through linear
solves, normally using Cholesky factors.

For a real-valued record of length $T$, the number of nonredundant DFT
frequencies is proportional to $T$.  Table~\ref{tab:computational-cost}
summarizes the leading factorization cost per objective evaluation and the
additional model-specific storage.

\begin{table}[H]
\centering
\small
\caption{Leading computational cost and additional model-specific memory with streamed calculations over temporal frequencies.}
\label{tab:computational-cost}
\begin{tabular}{@{}lcc@{}}
\toprule
Spatial covariance calculation
& \shortstack{Cost per objective evaluation}
& Storage memory \\
\midrule
Analytic full-rank
& $O(Tn^3)$
& $O(n^2)$ \\
HSGP in feature space
& $O(TM^3)$
& $O(nM+M^2+TM)$ \\
\bottomrule
\end{tabular}
\end{table}

% For a real-valued record of length $T$, only approximately $T/2$
% non-redundant DFT frequencies need to be evaluated; this constant factor is
% suppressed in the following complexity orders.  The temporal DFT costs
% $\mathcal{O}(nT\log T)$ once.  For a fixed basis and a known mean, forming
% $\bm G$, $\{\bm b_k\}$, and $\{u_k\}$ costs $\mathcal{O}(nM^2+TnM)$ once.  
% After this preprocessing, each HSGP objective evaluation requires
% $\mathcal{O}(TM^3)$ dense feature-space work, together with lower-order evaluations
% of the spectral multiplier.  Caching therefore removes repeated
% calculations involving all $n$ spatial observations, but it does not remove
% the dependence on the number of temporal frequencies.

% The analytic full-rank calculation can likewise reuse the DFT coefficients
% and spatial separations.  Its frequency-specific covariance matrices,
% however, generally change with the model parameters and must be factorized
% during each objective evaluation, giving a leading cost of $\mathcal{O}(Tn^3)$ in
% addition to covariance construction.  Hence feature-space HSGP
% factorization can be advantageous when $M\ll n$.  When $M\geq n$,
% observation-space factorization is generally preferable.  Actual runtime
% also depends on covariance or multiplier evaluation and optimizer
% convergence, so these orders do not imply that HSGP is universally faster.

\section{Numerical experiments}
\label{sec:numerics}

\subsection{Benchmark spectral models and assessment strategy}
\label{sec:numerical-setup}

\subsubsection{ Analytically auditable spectral models}
\label{sec:fuentes}

The numerical experiments use two complementary spectral constructions.  The
first is the stationary nonseparable model of Fuentes, Chen, and Davis
\cite[Eq.~(3)]{fuentes2008}, with continuous space-time spectral density
\begin{equation}
 f(\bm\omega,\tau;\bm\theta)\propto
 \left[
 \alpha^2\beta^2+\beta^2\|\bm\omega\|^2
 +\alpha^2\tau^2+\epsilon\|\bm\omega\|^2\tau^2
 \right]^{-\nu}.
 \label{eq:fuentes-spectrum}
\end{equation}
Here $\alpha>0$ and $\beta>0$ determine spatial and temporal scales,
respectively, while $\epsilon>0$ controls the space-time interaction.  The
parameter $\nu>\max(\frac{d}{2},\frac{1}{2})$ ensures the required integrability, and the
normalizing constant determines the latent variance $v_{\mathrm{f}}$.

% At a fixed temporal frequency, the spatial inverse transform is proportional
% to a Mat\'ern correlation of order $\nu-d/2$, with inverse range
% \begin{equation}
%  \kappa(\tau)
%  =
%  \alpha
%  \left(
%  \frac{\beta^2+\tau^2}{\beta^2+\epsilon\tau^2}
%  \right)^{1/2}.
%  \label{eq:fuentes-range}
% \end{equation}
% The spatial dependence may therefore change with temporal frequency; the
% model becomes separable when $\epsilon=1$.  Because both its spatial inverse
% transform and its spectral multiplier are available, this model provides an
% independent benchmark for evaluating the HSGP approximation.  Temporal
% sampling introduces an alias sum, for which the target and HSGP calculations
% use the same normalization and truncation.  The resulting sampled covariance
% and multiplier are given in Appendix~\ref{app:fuentes}.

For a fixed temporal frequency $\tau$, the half-spectrum is obtained by
spatially inverting the joint spectral density in
\eqref{eq:fuentes-spectrum}.  Let
\[
A(\tau)=\alpha^2(\beta^2+\tau^2),\qquad
B(\tau)=\beta^2+\epsilon\tau^2,\qquad
\eta=\nu-\frac{d}{2}.
\]
The inversion is available analytically and gives
\begin{equation}
H_{\mathrm c}(\bm h,\tau;\bm\theta)
\propto
p(\tau)\,
\Matern_{\eta}\!\left(
\kappa(\tau)\|\bm h\|
\right),
\qquad
p(\tau)
=
A(\tau)^{\frac d2-\nu}B(\tau)^{-\frac d2},
\label{eq:fuentes-half-spectrum}
\end{equation}
where $\Matern_{\eta}$ denotes the Mat\'ern correlation function of order
$\eta$ and
\begin{equation}
\kappa(\tau)
=
\Big(\frac{A(\tau)}{B(\tau)}\Big)^{1/2}
=
\alpha
\Big(
\frac{\beta^2+\tau^2}
     {\beta^2+\epsilon\tau^2}
\Big)^{1/2}
\label{eq:fuentes-range}
\end{equation}
is the frequency-dependent inverse spatial range.  The omitted proportionality
constant is determined by the latent variance $v_{\mathrm f}$.

Both the amplitude $p(\tau)$ and the spatial range may therefore vary with
temporal frequency.  When $\epsilon=1$, $\kappa(\tau)=\alpha$ and the
space-time model becomes separable.  The analytic expression
\eqref{eq:fuentes-half-spectrum} supplies the full-rank spatial covariance
used as the reference calculation, whereas the HSGP construction evaluates
the corresponding spectral multiplier at the retained Laplacian
eigenfrequencies.  The two calculations consequently provide distinct
computational representations of the same frequency-specific spatial
covariance.  Temporal sampling introduces an alias sum, for which both
representations use the same normalization and truncation; further details
are given in Appendix~\ref{app:fuentes}.

The second spectral construction, introduced in
Section~\ref{sec:direct-spectral-experiment}, provides a complementary
setting.  Its joint spatial-temporal spectrum, or equivalently its spatial
spectral multiplier at each temporal frequency, is specified directly, but
the corresponding half-spectrum has no convenient closed-form spatial
inverse transform.  A high-accuracy numerical inversion is therefore used
to construct the reference covariance, while the HSGP approximation operates
directly on the available spectral multiplier.  This contrast isolates the
potential benefit of HSGP when evaluating the half-spectrum would otherwise
require repeated numerical spatial integration.

\subsubsection{Assessment strategy}
\label{sec:diagnostics}

The experiments distinguish temporal and spatial approximations.  For an
ordinary finite record, frequency-wise decoupling is a Whittle approximation
whose accuracy depends on the temporal dependence and record length.  The
circular simulations remove this source of error by construction.  For a
sampled continuous-time model, the infinite alias sum is part of the target
covariance, whereas truncating that sum is a numerical approximation.

The spatial HSGP approximation is controlled by the extended domain and the
number of retained Laplacian modes.  Increasing the number of
features improves spectral resolution on a fixed domain, while enlarging the
domain reduces boundary influence but requires additional modes to preserve
the same frequency coverage.  The precise basis construction and convergence
conditions are summarized in Appendix~\ref{app:spatial}.

Before model fitting, the HSGP and reference spatial covariance matrices are
compared at fixed parameter values.  This response-free audit measures the
spatial approximation error independently of parameter estimation.  After
fitting, model performance is evaluated through likelihood values, parameter
recovery, held-site reconstruction RMSE, and computational time.

\subsection{Synthetic experiments}

\subsubsection{Analytically available half-spectrum: the Fuentes model}
\label{sec:fuentes-synthetic-experiment}

This experiment evaluates the spatial HSGP approximation in a setting where
the Fuentes spectral model \eqref{eq:fuentes-spectrum} yields an analytically available spatial covariance matrix at
each temporal frequency.  We refer to the corresponding untruncated covariance
calculation as the \emph{analytic full-rank reference}.  This designation
describes the spatial covariance engine, not implying that the model
parameters are known or that every component of the finite-sample inference is
exact.

The simulation used a $12\times12$ spatial grid, $T=192$ regularly spaced time
points, $n=120$ fitting sites, and $n_\ast=24$ whole-site holdouts.  Twenty
independent paired replicates were generated.  The signal and noise variances
were fixed at $v_{\mathrm f}=1$ and $v_{\mathrm n}=0.05$, respectively.
Configuration A used
$(\alpha,\beta,\epsilon)=(1.5,0.35,0.35)$, whereas Configuration B used
$(5,0.35,0.35)$ and therefore represented the shorter-range, more spatially
demanding setting.

% The analytic-reference and HSGP models were fitted independently to each
% realization under the same multistart optimization scheme.  Prediction was
% then evaluated under five parameter-engine combinations: two oracle
% comparisons used the generating parameter $\bm\theta_0$ with either the
% analytic full-rank or HSGP covariance engine.  The remaining comparisons used
% the analytic-reference estimate with its own engine and the HSGP estimate with
% both engines.  This design separates parameter-estimation effects from the
% effect of replacing the full-rank spatial covariance by its finite HSGP
% representation.  
The analytic-reference and HSGP models were fitted independently to each
realization under the same multistart optimization scheme.  Prediction was
then evaluated under five parameter-engine combinations.
The first two use the known generating parameter $\bm\theta_0$ and therefore
isolate the effect of replacing the analytic full-rank covariance by its HSGP
representation.  The third scheme is the complete analytic-reference
procedure.  The fourth evaluates the HSGP parameter estimate using the analytic
full-rank covariance and is included as a diagnostic separation of training
and reconstruction effects.  The fifth is the complete HSGP procedure, using
the HSGP approximation in both parameter estimation and held-site
reconstruction. The HSGP bases were selected by response-free covariance
screening and contained $M=400$ and $M=576$ features for Configurations A and
B, respectively.

Table~\ref{tab:fuentes-prediction-rmse} reports mean latent-field RMSE over the
20 paired prediction replicates.  The bootstrap resampled replicate indices
jointly across all five schemes, thereby preserving their paired structure.

\begin{table}[H]
\centering
\small
\caption{Mean held-site latent-field RMSE over 20 paired reconstruction
replicates.  ``Parameter source'' identifies the parameter vector used in
conditional reconstruction, and ``reconstruction covariance'' identifies
the spatial covariance representation used to calculate the conditional
moments.}
\label{tab:fuentes-prediction-rmse}
\begin{tabular}{@{}llcc@{}}
\toprule
Parameter source & Reconstruction covariance & RMSE of Configuration A  & RMSE of Configuration B \\
\midrule
Oracle $\bm\theta_0$ & Analytic full-rank
  & $0.0757\pm 0.0007$ & $0.1917\pm 0.0017$ \\
Oracle $\bm\theta_0$ & HSGP
  & $0.0783\pm0.0007$ & $0.1962\pm 0.0017$ \\
Analytic-reference fit & Analytic full-rank
  & $0.0758\pm 0.0006$ & $0.1917\pm 0.0016$ \\
HSGP fit & Analytic full-rank
  & $0.0760\pm0.0006$ & $0.1918\pm 0.0016$ \\
HSGP fit & HSGP
  & $0.0781\pm 0.0007$ & $0.1975\pm 0.0017$ \\
\bottomrule
\end{tabular}
\end{table}

The five schemes occupy a narrow absolute RMSE range within each configuration:
$0.0757$-$0.0783$ for A and $0.1917$-$0.1975$ for B.  At the generating parameters, replacing the analytic full-rank covariance by the HSGP representation increased mean RMSE by approximately $3.4\%$ in Configuration A and $2.3\%$ in Configuration B.  When both fitted parameter vectors were evaluated with the analytic full-rank covariance, their RMSE
values differed by less than $0.3\%$, indicating that the change in training
objective had little predictive consequence in this experiment. Holding the HSGP-fitted parameter vector fixed, replacing the analytic full-rank covariance with the finite-rank HSGP covariance in the held-site reconstruction increased the mean RMSE by approximately \(2.8\%\) in Configuration A and \(3.0\%\) in Configuration B. Because the fitted parameters are identical in this comparison, these increases isolate the effect of the HSGP approximation during reconstruction.  The modest loss of the complete HSGP procedure is therefore
mainly associated with the finite-rank covariance representation used during
held-site reconstruction rather than with displacement of the fitted
parameters.  Overall, the results support close, but not identical,
reconstruction performance.

Similar held-site RMSEs did not imply equally accurate parameter or covariance recovery. The analytic-reference fits more closely recovered the generating covariance, whereas the HSGP fits showed larger covariance discrepancies and systematic upward shifts in \(\alpha\), \(\beta\), and \(\epsilon\). Detailed parameter and covariance-error summaries are reported in Tables~\ref{tab:parameter-recovery}-\ref{tab:fuentes-covariance-errors} of the appendix. These results indicate that held-site reconstruction was comparatively robust to the differences between the fitted models, even though their parameter estimates and implied covariance structures remained distinguishable.

Figures~\ref{fig:fuentes-reconstructions-A} and
\ref{fig:fuentes-reconstructions-B} complement the aggregate RMSE by showing
the first prespecified prediction replicate at three holdout locations chosen
solely from their geometry.  Grouping the centre, edge-midpoint, and corner
sites by covariance configuration makes their progressively stronger boundary
influence visible without selecting locations according to observed errors.

\begin{figure}[H]
\centering
\includegraphics[width=0.98\textwidth]{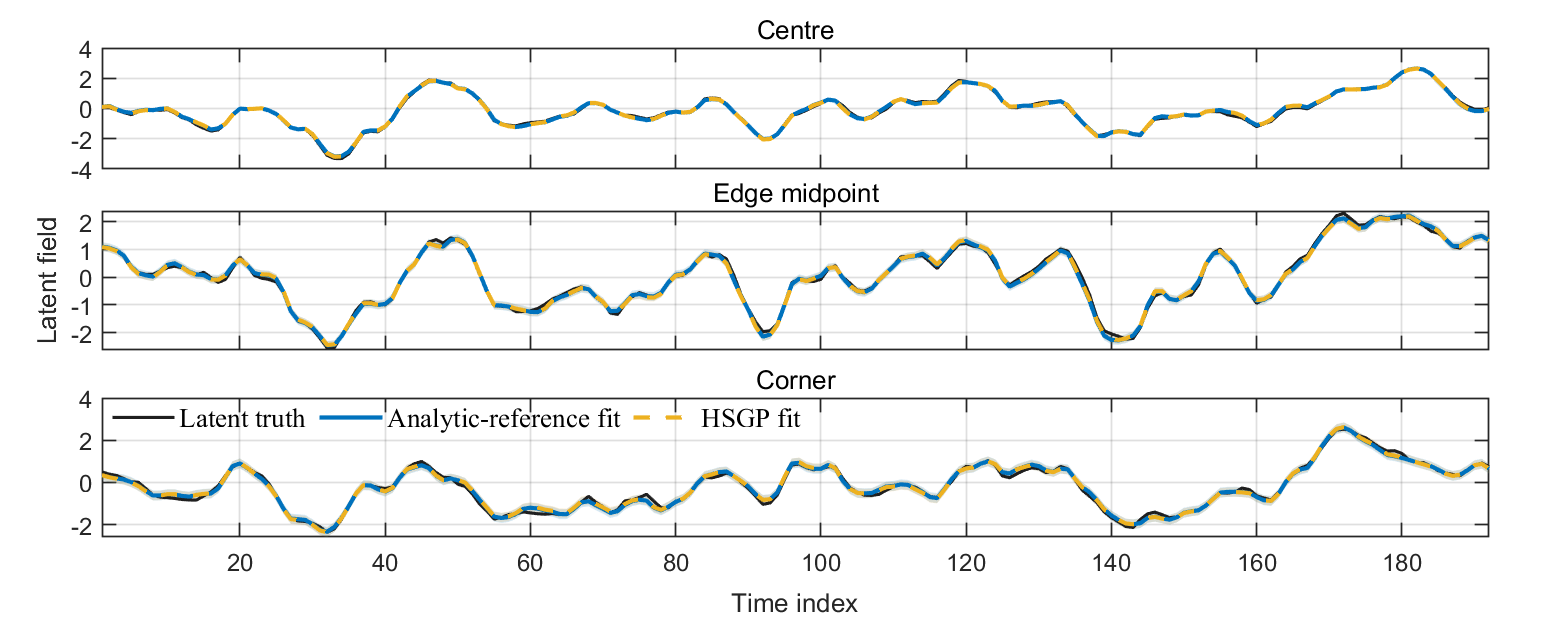}
\caption{Representative latent-field reconstructions for Configuration A at a
centre site, an edge-midpoint site, and a corner site.  The sites were selected
from geometry alone and the displayed realization was fixed by replicate index
before examining prediction performance.  Shaded regions are pointwise 95\%
conditional intervals with fitted parameters held fixed.}
\label{fig:fuentes-reconstructions-A}
\end{figure}

\begin{figure}[H]
\centering
\includegraphics[width=0.98\textwidth]{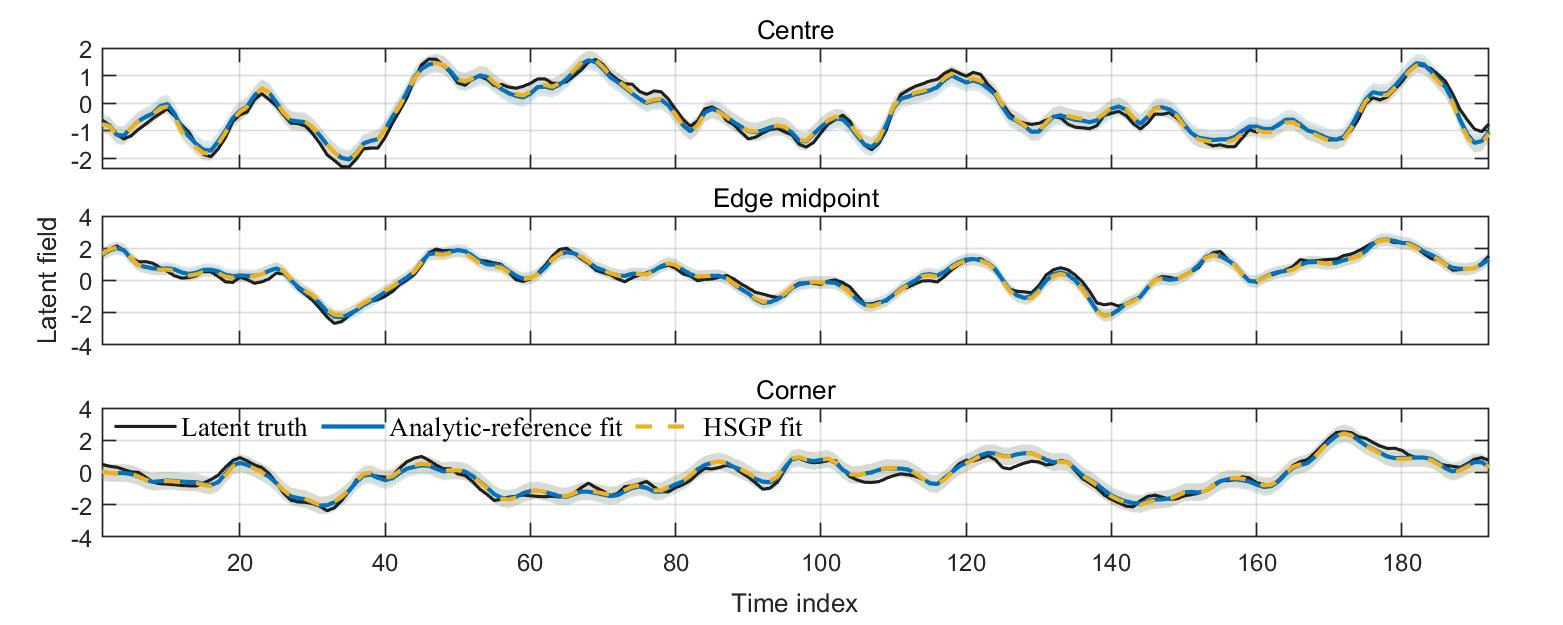}
\caption{Representative latent-field reconstructions for Configuration B at a
centre site, an edge-midpoint site, and a corner site.  The sites and replicate
were prespecified as in Figure~\ref{fig:fuentes-reconstructions-A}; shaded
regions are pointwise 95\% conditional intervals with fitted parameters held
fixed.}
\label{fig:fuentes-reconstructions-B}
\end{figure}

\subsubsection{Direct spectral construction without a closed-form inverse transform}
\label{sec:direct-spectral-experiment}

The preceding experiment is deliberately favorable to direct spatial
evaluation because the Fuentes multiplier has a Mat\'ern inverse transform.
We therefore considered a complementary, deliberately constructed spectral
model for which the multiplier is elementary to evaluate but its spatial
inverse transform is evaluated numerically.  The construction combines the
standard discrete-time AR(1) spectral factor \cite{brockwelldavis1991} with a
nonnegative isotropic spatial multiplier.  The particular radial factor was
chosen for this experiment rather than taken from a named covariance family;
its nonnegativity and integrability ensure a valid spectral construction.
In two spatial dimensions, the data-generating multiplier was
\begin{equation}
 S_{\mathrm{d}}(\bm\omega,\lambda;\bm\theta)
 =\frac{v_{\mathrm{f}}}{\sqrt{2\pi}}g_\rho(\lambda)
 q_{\kappa(\lambda)}(\lVert\bm\omega\rVert),
 \qquad
 q_\kappa(r)=\frac{9\sqrt{3}}{\kappa^2}
 \Big(1+\big(\frac r\kappa\big)^3\Big)^{-2},
 \label{eq:rational-cubic-multiplier}
\end{equation}
where
\begin{equation}
 g_\rho(\lambda)=
 \frac{1-\rho^2}{1+\rho^2-2\rho\cos\lambda},
 \qquad
 \kappa(\lambda)=\kappa_0
 \big(1+\gamma\sin^2(\lambda/2)\big)^{\frac{1}{2}}.
 \label{eq:rational-cubic-components}
\end{equation}
Here $g_\rho$ is normalized to have unit average over $[-\pi,\pi]$ and
$\rho$ controls temporal persistence.  The parameter $\kappa_0$ is the
zero-frequency spatial inverse range, while $\gamma$ controls how that inverse
range changes with temporal frequency and hence induces nonseparability.  The
constant $9\sqrt{3}$ normalizes the spatial factor so that
$(2\pi)^{-2}\int_{\R^2}q_\kappa(\lVert\bm\omega\rVert)\,\mathrm d\bm\omega=1$.

For spatial lag $\bm h$, the corresponding half-spectrum can be written as
\begin{equation}
 H_{\mathrm d}(\bm h,\lambda;\bm\theta)
 =\frac{v_{\mathrm f}}{\sqrt{2\pi}}g_\rho(\lambda)
 R\!\left(\kappa(\lambda)\lVert\bm h\rVert\right),
 \qquad
 R(a)=\frac{9\sqrt{3}}{2\pi}\int_0^\infty
 \frac{uJ_0(au)}{(1+u^3)^2}\,\mathrm du,
 \label{eq:constructed-radial-half-spectrum}
\end{equation}
where $J_0$ is the Bessel function of the first kind and $R(0)=1$.  Thus
$H_{\mathrm d}(\bm0,\lambda;\bm\theta)
=v_{\mathrm f}g_\rho(\lambda)/\sqrt{2\pi}$, and the temporal normalization
gives marginal variance $v_{\mathrm f}$.  The generating parameter was
$(v_{\mathrm{f}},\rho,\kappa_0,\gamma,v_{\mathrm{n}})=(1,0.85,5,2,0.05)$.

An independent numerical engine evaluated the radial function $R$ in
\eqref{eq:constructed-radial-half-spectrum} using panelwise high-order
Gauss-Legendre Hankel quadrature and interpolation.  It
was used both to simulate the fields and as the reference likelihood.  The
HSGP basis was not used in data generation; this separation avoids an inverse
crime in which the approximation would reproduce data generated from itself.
The temporal process was circular, making DFT frequency decoupling exact and
isolating the spatial approximation.  Twenty paired replicates used a
$15\times15$ spatial grid, $T=128$, 180 training sites, 45 whole-site holdouts,
and four deterministic optimizer starts.  Reference and HSGP models were fitted
independently to the same data and split in every replicate.

Response-free covariance screening selected the basis with
$(c_B,M)=(2,121)$.  Because $M=121<180$, this specification gives a genuine
feature-space reduction relative to the number of fitting sites.  The full
screening results are reported in Appendix~\ref{app:direct-spectral-audit}.

Table~\ref{tab:spectral-multiplier-results} summarizes held-site reconstruction
and fitting cost.  Latent-field RMSE compares the conditional reconstruction
with the simulated process $Z(\bm s,t)$.  Observed-field RMSE instead compares
with $Y(\bm s,t)=Z(\bm s,t)+e(\bm s,t)$ and therefore includes the irreducible
contribution from measurement noise with variance $v_{\mathrm n}$.  Latent-field
RMSE is consequently the primary reconstruction measure in this experiment.

\begin{table}[H]
\centering
\small
\caption{Held-site reconstruction and fitting cost for the directly specified
multiplier over 20 paired replicates.  RMSE entries are the replicate mean
$\pm$ a 95\% bootstrap half-width.  Time is the mean fitting time for one
independently optimized replicate.}
\label{tab:spectral-multiplier-results}
\begin{tabular}{@{}lcc@{}}
\toprule
Quantity & \shortstack{Numerical-inversion reference} & HSGP approximation\\
\midrule
Latent-field RMSE
& $0.1128\pm0.0008$ & $0.1139\pm0.0008$\\
Observed-field RMSE
& $0.2508\pm0.0010$ & $0.2513\pm0.0010$\\
Mean fitting time (s)
& $133.59$ & $41.30$\\
\bottomrule
\end{tabular}
\end{table}

The HSGP and numerical-inversion RMSE intervals overlap, and their absolute
differences are small: HSGP increases mean latent-field RMSE by $0.0012$
($1.05\%$) and mean observed-field RMSE by $0.0006$ ($0.22\%$). Mean HSGP
fitting time is $41.30$ seconds compared with $133.59$ seconds for numerical
inversion, so HSGP requires about one third of the reference time, a reduction of
$69\%$.

This gain follows from the covariance representation.  At each trial parameter
vector, the reference calculation obtains the frequency-specific spatial
covariance through numerical Hankel inversion and interpolation.  HSGP instead
constructs a finite-rank covariance by evaluating the available multiplier
\eqref{eq:rational-cubic-multiplier} at $M=121$ Laplacian eigenfrequencies, while
reusing the spatial basis and its cached Gram matrix throughout optimization.
The result therefore demonstrates a favorable accuracy-cost trade-off for
this multiplier-specified model and implementation; it does not establish a
general computational advantage for every spectral family or sampling design.

\begin{figure}[H]
\centering
\includegraphics[width=\textwidth]{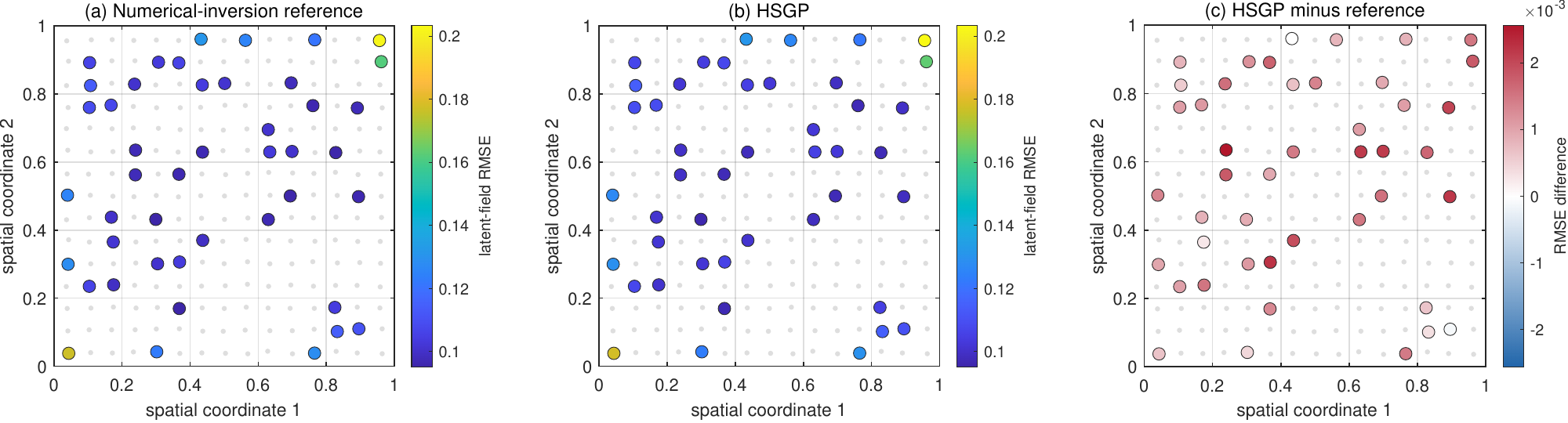}
\caption{Spatial distribution of held-site latent-field reconstruction error.
Panels (a) and (b) show sitewise RMSE for the numerical-inversion reference and
HSGP, respectively, pooled over 20 paired replicates and all $T=128$ time
points; they use a common color scale.  Panel (c) shows the HSGP-minus-reference
difference on a zero-centered scale.  Large colored points are held sites and
small gray points are fitting sites.}
\label{fig:spectral-multiplier-spatial-rmse}
\end{figure}

Figure~\ref{fig:spectral-multiplier-spatial-rmse} shows that the two methods
have nearly the same spatial error pattern.  Across the 45 held sites, the
HSGP-minus-reference difference ranges from $-1.1\times10^{-3}$ to $2.6\times10^{-3}$. The
largest local increase is relatively $1\%$ to $2\%$ compared to sitewise RMSE values. Together with the paired result in Table~\ref{tab:spectral-multiplier-results}, the map supports
similar spatial reconstruction patterns with a small, statistically resolved
loss in RMSE and a substantial reduction in observed fitting time.  All
numerical-inversion fits and 19 of the 20 HSGP fits satisfied the optimizer
convergence criterion; the remaining HSGP fit was retained because its
reconstruction error lay within the empirical replicate range.

\subsection{North Sea reanalysis applications}
\label{sec:north-sea-reanalysis}

The two reanalysis applications use the same meteorological variable and
geographical domain but different source products.  ERA5 is ECMWF's global
atmospheric reanalysis \cite{hersbach2020}, whereas CERRA is a higher-resolution
regional reanalysis for Europe \cite{ridal2024}.  From each archive we extracted
six-hourly 10-m eastward wind over the North Sea box
$54^\circ$-$58^\circ$N and $2^\circ$-$6^\circ$E.  ERA5 provides the
moderate-resolution experiment below; CERRA supplies a much denser spatial
grid for examining the effect of HSGP rank and record length.  In both cases,
complete time series are withheld at the reconstruction sites, so the
experiments assess spatial interpolation rather than forecasting beyond the
observed time window.

\subsubsection{ERA5: moderate-resolution reconstruction}
\label{sec:era5-application}

The ERA5 experiment contains $T=360$ observations over 90 days on a
$17\times17$ grid ($n=289$).  The sites were divided into five disjoint,
approximately equal groups.  For each fold, the complete 90-day series at 57
or 58 sites was withheld and the model was fitted to the remaining 231 or 232
sites.  After the five fits, every site had served exactly once as an unseen
reconstruction location.  All methods used these same five folds.  Within
each fold, the analytic full-rank model and each HSGP model minimized their
own Whittle objective and reconstructed the held-out sites using the resulting
parameter estimates.  The HSGP comparisons used $J=32$ ($M=1024$) and
$J=48$ ($M=2304$) basis functions.

\begin{table}[H]
\centering
\small
\caption{Five-fold whole-location reconstruction for the North Sea ERA5
field.  RMSE is averaged over the 289 held-out sites and reported with a
paired-bootstrap 95\% interval; RMSE is in $\mathrm{m\,s^{-1}}$.  NLPD is the
mean negative log predictive density.  Lower values are better for both
criteria.}
\label{tab:era5-cv}
\begin{tabular}{@{}lccc@{}}
\toprule
Method & $M$ & RMSE & NLPD \\
\midrule
Analytic full-rank reference & {-}  & $0.072 \pm 0.003$  & $-1.1595$ \\
HSGP, $J=32$                & 1024 & $0.092\pm 0.002$ & $-0.9146$ \\
HSGP, $J=48$                & 2304 & $0.076 \pm 0.002$ & $-1.1247$ \\
\bottomrule
\end{tabular}
\end{table}

The analytic full-rank calculation attained the lowest RMSE and NLPD.
Increasing the HSGP truncation from $J=32$ to $J=48$ reduced the mean RMSE
from $0.0921$ to $0.0759\,\mathrm{m\,s^{-1}}$.  The latter was
$0.0044\,\mathrm{m\,s^{-1}}$, or $6.2\%$, above the analytic-reference RMSE,
whereas the $J=32$ result was $28.9\%$ higher.  The improvement is consistent
with decreasing truncation error as the retained basis is enlarged, although
two ranks are insufficient for estimating a convergence rate.

Figure~\ref{fig:era5-reconstruction} illustrates the reconstruction at three
geometrically distinct sites.  For a fitted method, the ribbon at time $j$ is
the pointwise plug-in observation interval
$\widehat y_j\mathbin{\pm}1.96\sqrt{V^Y_j}$, where $V^Y_j$ is the conditional
variance of an additional noisy observation at the held-out site.  The
interval includes the fitted nugget variance and treats the estimated model
parameters as fixed.  It is therefore neither a bootstrap confidence interval
nor a simultaneous band over the full trajectory.

\begin{figure}[h]
\centering
\includegraphics[width=0.98\textwidth]{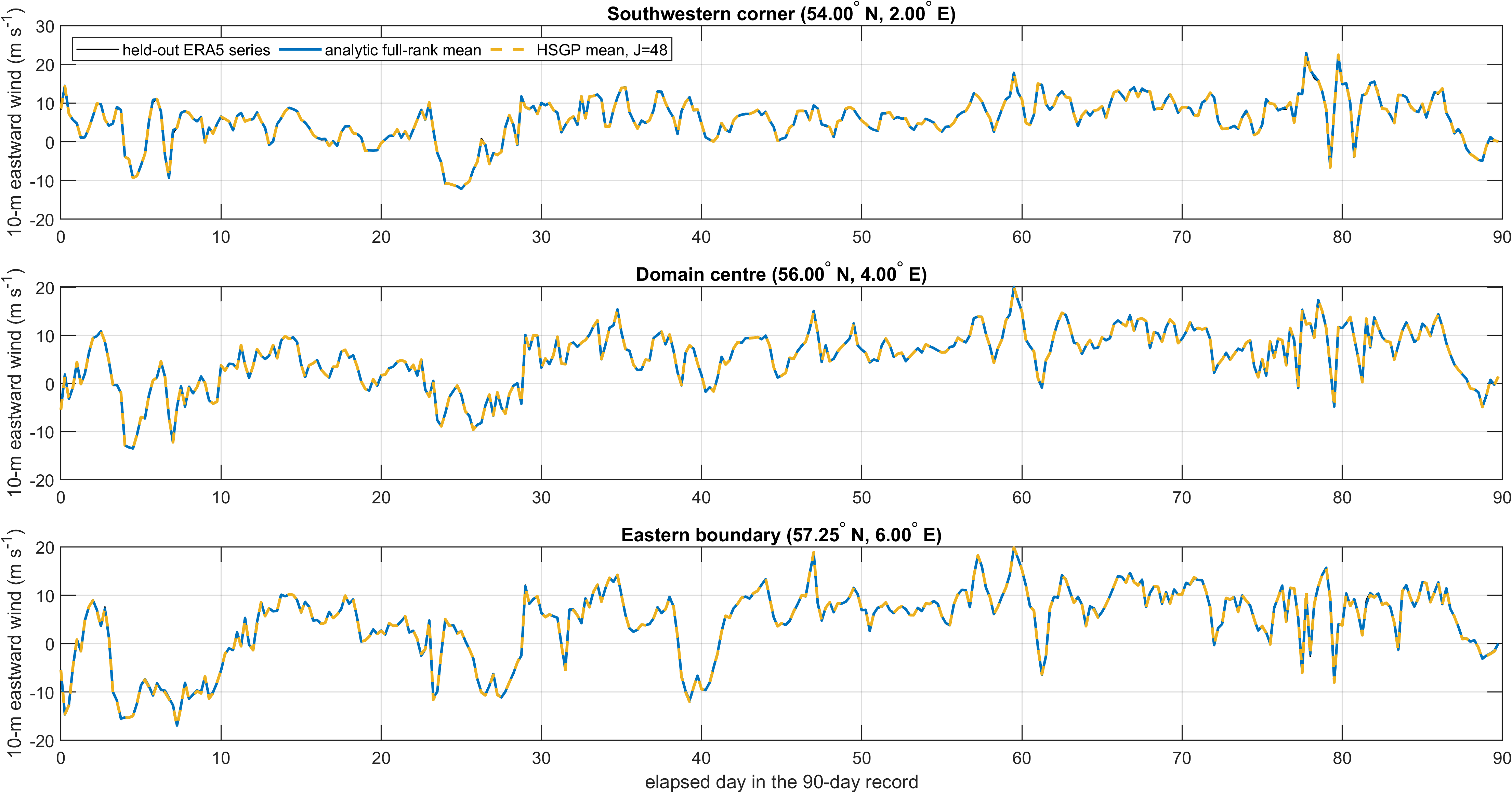}
\caption{Representative North Sea ERA5 reconstructions at a southwestern
corner site, a central site, and an eastern-boundary site.  The thin black line
is the held-out 10-m eastward wind, the solid blue line is the analytic
full-rank conditional mean, and the dashed gold line is the HSGP conditional
mean for $J=48$.  Light blue and gold ribbons are the corresponding pointwise
95\% plug-in observation intervals defined in the text.}
\label{fig:era5-reconstruction}
\end{figure}

The fixed-parameter covariance audit did not certify any tested value of
$J$: its 90th-percentile relative latent-covariance error remained about
$0.436$ even at $J=48$.  This finding does not invalidate the held-site
reconstruction results, but it prevents the stronger claim that the fitted
HSGP uniformly reproduces the analytic covariance matrices under the chosen
domain extension.  Moreover, $M=2304$ exceeds the number of fitting sites in
each fold.  The ERA5 analysis therefore supports accurate reconstruction at
the larger rank, but not a reduced-rank computational advantage.

\subsubsection{CERRA: dense-grid reconstruction and record length}
\label{sec:cerra-application}

The CERRA extraction contains 3701 ocean-grid locations in the same North Sea
box.  A common set of 300 complete spatial series was reserved for evaluation,
leaving 3401 fitting sites.  We analyzed a 31-day record with $T=124$ and a
90-day record with $T=360$, both sampled every six hours.  For each record,
the HSGP used all 3401 fitting sites, boundary factor $c_B=2$, and radial
Laplacian truncations with $M\in\{250,500,750,1000\}$.  The feature matrix and
$\bm\Phi^\trans\bm\Phi$ were cached for each geometry.

\begin{table}[H]
\centering
\small
\caption{Pooled held-site RMSE ($\mathrm{m\,s^{-1}}$) for the full-grid CERRA
HSGP fits.  All rows use the same 3401 fitting and 300 held-out locations.  The
final column is the observed difference between the 90-day and 31-day
experiments.}
\label{tab:cerra-dense}
\begin{tabular}{@{}cccc@{}}
\toprule
$M$ & $T=124$ (31 days) & $T=360$ (90 days) & $\mathrm{RMSE}_{T360}-\mathrm{RMSE}_{T124}$ \\
\midrule
250  & 0.3013 & 0.2589 & $-0.0424$ \\
500  & 0.2139 & 0.1816 & $-0.0323$ \\
750  & 0.1673 & 0.1393 & $-0.0280$ \\
1000 & 0.1357 & 0.1112 & $-0.0245$ \\
\bottomrule
\end{tabular}
\end{table}

At either record length, increasing $M$ produced a monotone reduction in
held-site RMSE.  At every tested rank, the 90-day experiment also had lower
RMSE than the 31-day experiment.  The latter contrast is descriptive rather
than causal: the windows contain different meteorological realizations, and
only one window of each length was studied.

The companion benchmark asked whether fitting the full grid through $M$
HSGP features improves on fitting an analytic full-rank model to $n_c=M$
space-filling observations.  At $M=n_c=250$ and 500, the HSGP fits were faster
but less accurate, and the spatial-block bootstrap intervals for their excess
sitewise RMSE excluded zero.  For $T=360$, HSGP required $M=1000$ and about
6.93 hours of recorded multistart CPU time to approach the RMSE of the
analytic model using $n_c=500$ sites, which required about 4.42 hours.  Thus
the tested implementation does not establish an accuracy-cost advantage over
spatial thinning.  The full matched-budget results and their bootstrap
intervals are reported in Appendix~\ref{app:cerra-matched}.

% Figure~\ref{fig:cerra-spatial-snapshots} shows the spatial distribution of
% absolute reconstruction error for the $T=360$, $M=1000$ fit at four time
% indices.  Natural-neighbour interpolation is used only to display a continuous
% surface; all RMSE calculations use the errors at the original 300 held-out
% sites.  The relatively large spatial RMSE at $j=120$ shows that the pooled
% summary can conceal substantial variation over time.

Figure~\ref{fig:cerra-spatial-snapshots} shows the spatial distribution of
absolute reconstruction error for the $T=360$, $M=1000$ fit at the beginning,
midpoint, and end of the record. Natural-neighbour interpolation is used
within the spatial region supported by the held-out sites, with
nearest-neighbour filling confined to narrow boundary areas. All reported
spatial RMSE values are calculated directly from the errors at the original
300 held-out sites. The larger RMSE at $j=180$ illustrates temporal variation
that is not visible in the pooled reconstruction error.

\begin{figure}[H]
\centering
\includegraphics[width=0.99\textwidth]{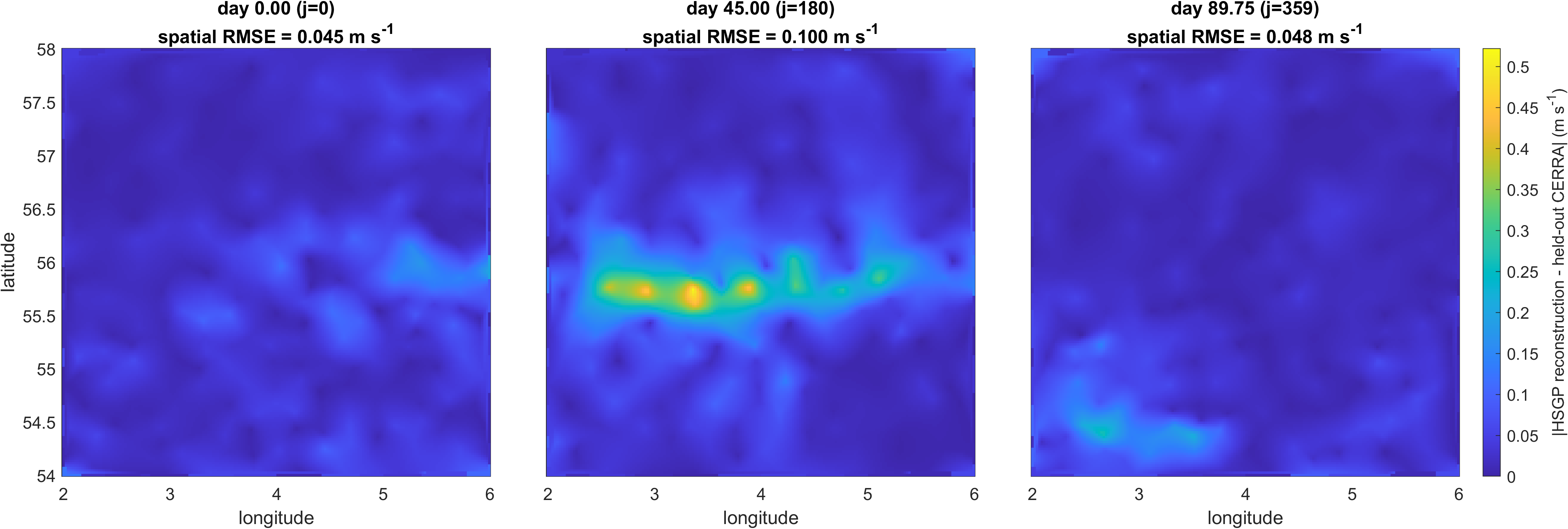}
\caption{Interpolated maps of absolute held-site reconstruction error for the
full-grid CERRA HSGP fit with $M=1000$ and $T=360$. Panels correspond to
$j=0$, 180, and 359, where $j=359=T-1$ is the final sampled index. Panel
titles report spatial RMSE across the original 300 held-out sites. The panels
share a common colour scale; interpolation and boundary filling are used for
visualization.}
\label{fig:cerra-spatial-snapshots}
\end{figure}

The response-free covariance audit remained near $0.204$ as $M$ increased
from 250 to 1000, suggesting a finite-domain or boundary-error floor under
$c_B=2$ rather than unresolved high spatial frequencies alone.  Taken
together, the CERRA results show steadily improved reconstruction with rank
and an observed benefit from the longer record, while also documenting the
absence of computational dominance over the thinned analytic reference.

\subsection{Pacific wind reconstruction}

This application revisits the tropical Pacific zonal-wind record analysed by
Cressie and Huang~\cite{cressiehuang1999}.  The data comprise $T=480$
six-hourly measurements from November 1992 through February 1993 on a regular
$17\times17$ grid, giving $n=289$ spatial sites.  A balanced five-fold
whole-location design withheld the complete time series at 57 or 58 sites per
fold and fitted the model to the remaining 231 or 232 sites.  The experiment
therefore assesses spatial reconstruction at unobserved locations rather than
forecasting at future times.  The fitted mean contained an intercept and
linear east-west and north-south trends.

The response-free basis screen selected an extended domain with $c_B=4$ and
$J=48$ modes in each coordinate, corresponding to $M=2304$ spatial features.
On each training fold we fitted the nonseparable half-spectral HSGP and its
separable restriction, obtained by setting $\epsilon=1$.  We also evaluated
nearest-neighbour interpolation and four-neighbour inverse-distance weighting
(IDW) on the same folds.  At each time point these two baselines use only the
simultaneously observed training-site values; their neighbours and
inverse-square-distance weights are determined from the training geometry and
remain fixed over time.  Because a dense full-rank likelihood was not
evaluated for this 289-site experiment, the comparison concerns the two HSGP
specifications and these local spatial interpolators; it is not an audit
against an analytic full-rank calculation.

\subsubsection{Held-site performance}

\begin{table}[H]
\centering
\small
\caption{Five-fold whole-location reconstruction for the Pacific wind panel.
Metrics were first computed for each held-out time series and then averaged
over all 289 sites.  RMSE and MAE are in $\mathrm{m\,s^{-1}}$ and are reported
as the mean $\pm1.96$ spatial-block-bootstrap standard errors, using 10,000
resamples of non-overlapping $4\times4$ grid blocks.  NLPD denotes mean
negative log predictive density and is defined only for the two probabilistic
HSGP models.}
\label{tab:pacific-cv}
\begin{tabular}{@{}lccc@{}}
\toprule
Method & RMSE (95\% interval) & MAE (95\% interval) & NLPD \\
\midrule
Nonseparable half-spectral HSGP & $0.1252\pm0.0661$ & $0.1033\pm0.0555$ & 1.2105 \\
Separable HSGP                  & $0.1255\pm0.0647$ & $0.1035\pm0.0562$ & 1.2081 \\
Four-neighbour IDW              & $0.4801\pm0.0627$ & $0.3681\pm0.0479$ & {-} \\
Nearest neighbour               & $1.0998\pm0.0937$ & $0.8068\pm0.0810$ & {-} \\
\bottomrule
\end{tabular}
\end{table}

Both HSGP models substantially improved on local interpolation.  Relative to
four-neighbour IDW, the nonseparable HSGP reduced mean sitewise RMSE by $74\%$;
relative to nearest-neighbour interpolation, the reduction was $89\%$.  The
two HSGP specifications were otherwise nearly indistinguishable.  Allowing
frequency-dependent spatial range through $\epsilon\ne1$ reduced RMSE by only
$0.00033\,\mathrm{m\,s^{-1}}$ ($0.26\%$), whereas the separable restriction
had a marginally smaller NLPD by 0.0024.  The fitted nonseparable interaction
parameter nevertheless remained well below the separable value in every fold,
with $\widehat\epsilon$ ranging from 0.256 to 0.326.  Thus, the lack of a
material predictive difference was not caused by numerical collapse to the
separable submodel.

Figure~\ref{fig:pacific-reconstruction} compares the two reconstructions at
three held-out sites on the central-latitude transect.  This fixed west-east
selection shows the more difficult western boundary together with central and
eastern locations, without choosing sites according to their realised errors.
Both conditional means follow the observed trajectories closely in the
interior and east, while the larger western discrepancies are consistent with
the aggregate spatial pattern considered below.

\begin{figure}[h]
\centering
\includegraphics[width=0.79\textwidth]{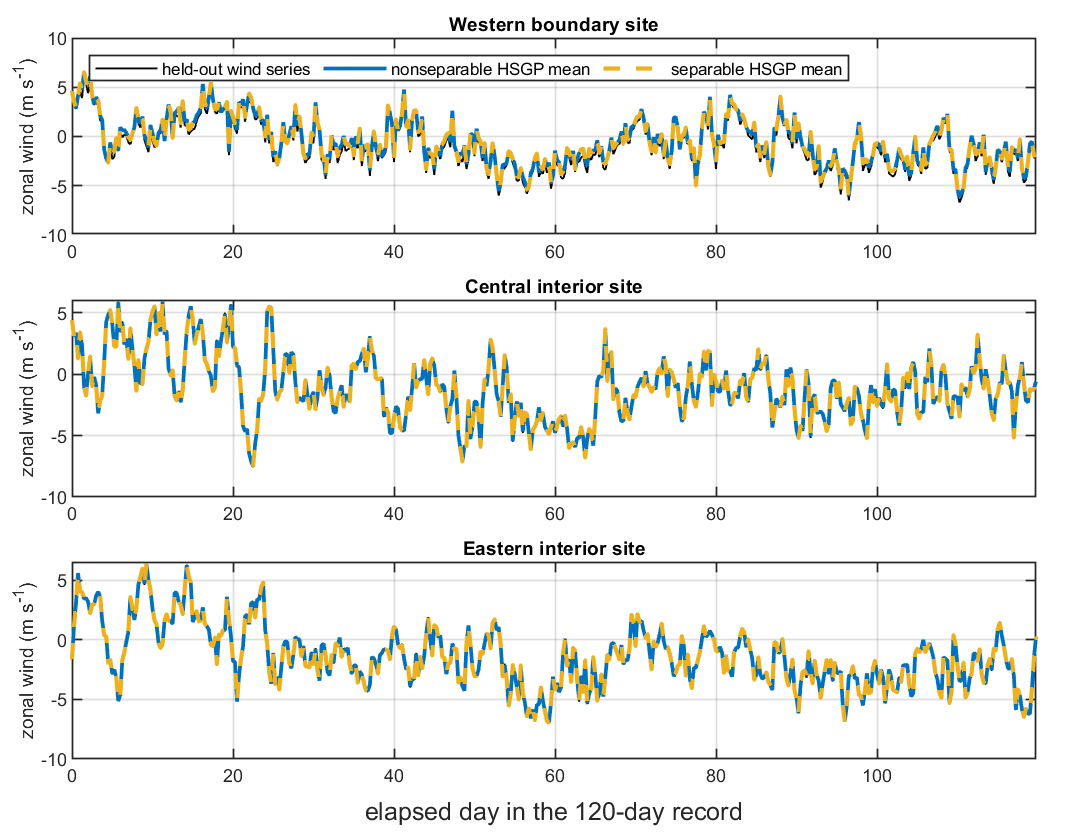}
\caption{Held-site zonal-wind reconstruction along the central-latitude
transect.  The thin black line is the observed series, the solid blue line is
the nonseparable HSGP conditional mean, and the dashed gold line is the
separable HSGP conditional mean.  Light ribbons show plug-in 95\% observation
intervals.  From top to bottom, the panels show a western boundary site, a
central interior site, and an eastern interior site on the same latitude.}
\label{fig:pacific-reconstruction}
\end{figure}

\subsubsection{Spatial error pattern}

Figure~\ref{fig:pacific-spatial-rmse} reports the out-of-fold RMSE separately
at every spatial site.  The two fitted covariance models have almost identical
error surfaces, and their differences are small over most of the grid.  The
main feature is not a separation between the two models but a concentration of
error near the western boundary.  For the nonseparable HSGP, mean sitewise
RMSE was $0.3854\,\mathrm{m\,s^{-1}}$ in grid columns 1-3,
$0.0937\,\mathrm{m\,s^{-1}}$ in columns 4-5, and
$0.0654\,\mathrm{m\,s^{-1}}$ in columns 6-17.

\begin{figure}[h]
\centering
\includegraphics[width=0.99\textwidth]{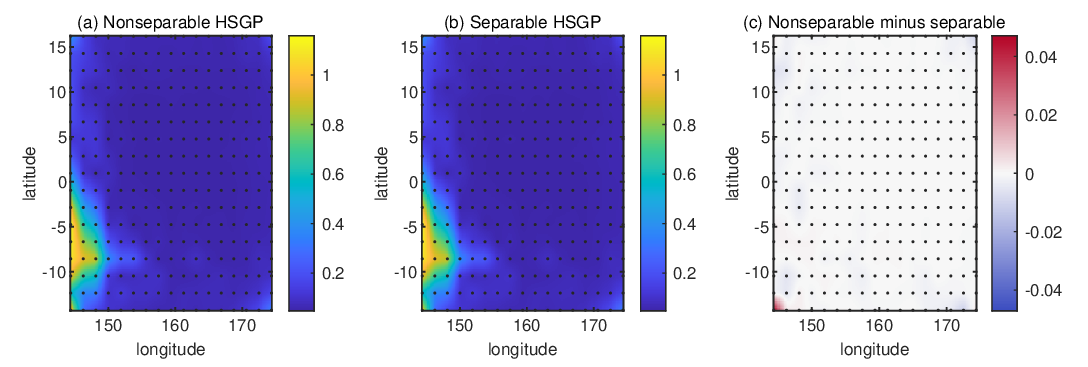}
\caption{Spatial distribution of whole-location cross-validation error.  The
first two panels show naturally interpolated sitewise RMSE for the
nonseparable and separable HSGP fits on a common colour scale.  The third panel
shows nonseparable minus separable RMSE; positive values favour the separable
restriction.  Black points mark the 289 evaluation sites.  Interpolation is
used only to display the spatial pattern; every numerical summary uses the
original out-of-fold sitewise errors.}
\label{fig:pacific-spatial-rmse}
\end{figure}

The western error concentration is compatible with residual spatial
heterogeneity under the stationary model, but it is not by itself evidence
for a particular nonstationary covariance.  A spatially inadequate mean,
anisotropy, finite-domain boundary effects, and genuine covariance
nonstationarity can produce similar validation patterns.  The map therefore
motivates a separate diagnostic study in which these explanations are tested
explicitly; it should not be interpreted as a post hoc change to the present
stationary experiment.

All nonseparable fits converged, no estimated nugget reached its lower bound,
and no numerical jitter was required.  Two nonseparable fitted parameter
vectors, and the spatial-temporal parameters of the separable fits, extended
beyond the box used in the original response-free basis certification.
Likelihood or parameter-recovery claims would therefore require a repeated
audit over an expanded box.  Finally, $M=2304$ exceeds the number of training
sites in each fold.  This application demonstrates numerically stable spatial
reconstruction with a deterministic basis, not a low-rank computational
saving.

\section{Discussion and conclusions}

The proposed method separates the temporal and spatial calculations.  The DFT
organizes a regular temporal record into frequency-specific vectors of spatial
coefficients.  For an ordinary finite record, coefficients at different
frequencies are generally dependent; the Whittle likelihood simplifies the
analysis by neglecting this dependence.  Within each frequency, HSGP replaces
the full spatial covariance by a finite expansion in a shared Laplacian basis.
The expansion weights are obtained by evaluating the sampled spectral
multiplier at the retained Laplacian eigenfrequencies.  Thus, when the
multiplier is directly available, the spatial covariance can be approximated
without first deriving a closed-form half-spectrum or repeatedly evaluating a
numerical spatial inverse transform.  This spatial approximation introduces
finite-domain and basis-truncation errors \cite{solinsarkka2020}, which are
distinct from the temporal approximation made by the Whittle likelihood.

The two simulation studies examine different aspects of this construction.
The Fuentes model has an analytically available half-spectrum, so its analytic
full-rank reference provides a direct benchmark for the HSGP approximation.
With sufficiently rich bases, HSGP gave held-site reconstructions close to this
reference.  Differences in parameter estimates and fitted covariance matrices
were nevertheless more apparent, showing that similar reconstruction errors do
not imply equivalent fitted models.  The second simulation considered a model
with an explicit spectral multiplier but no convenient closed-form expression of the
half-spectrum.  In this case, HSGP reconstruction remained close to the
numerical-inversion reference, while mean fitting time was reduced by $69\%$.
The timing result shows the benefit of avoiding repeated quadrature in this
particular model and implementation.

The wind-field applications show how approximation quality changes with basis
rank and data setting.  Increasing the number of retained basis functions
reduced held-site reconstruction error in both ERA5 and CERRA, but also raised the observed fitting cost.  The longer CERRA record also had lower observed RMSE at every tested rank.  Because the two records contain different weather realizations, however, this comparison does not isolate the effect of record length.  In the Pacific zonal wind application, the separable and nonseparable models produced similar
reconstruction scores even though their fitted interaction parameters were
different.  This result indicates limited predictive sensitivity to the
interaction parameter for the observed design.

% Improved reconstruction with a richer basis did not always translate into a
% computational advantage.  For ERA5, the basis needed to approach the analytic
% full-rank reference contained more features than fitting sites and therefore
% was not a low-rank representation.  In the matched-dimension CERRA comparison,
% analytic full-rank fitting on spatially thinned data produced lower
% reconstruction error than the tested full-grid HSGP fits.  Increasing the HSGP
% rank improved accuracy but also raised the observed fitting cost.  Although a
% fixed basis allows feature projections to be reused, caching alone does not
% ensure faster computation.  Runtime depends on the cost of constructing the
% spatial covariance, the availability of the multiplier, the basis rank needed
% for adequate accuracy, the factorization strategy, and optimizer convergence.
% The current evidence therefore does not support a general computational
% advantage over analytic full-rank fitting or spatial thinning.

The results also point to several limitations and directions for further work.
The Whittle approximation should be examined through finite-record
cross-frequency diagnostics and comparisons using temporal tapering.  In
CERRA, the covariance error did not disappear as the basis rank increased,
suggesting that domain extension and boundary conditions require further
attention.  Alternative or boundary-adapted bases and adaptive selection of
rank and extension size may reduce this error.  In the Pacific application,
larger reconstruction errors were concentrated in the western part of the
domain.  This pattern indicates spatial heterogeneity or covariance nonstationarity.  %Targeted residual diagnostics are needed before introducing spatially varying or nonstationary spectral models.
Future comparisons should also cover additional multiplier families, spatial
designs, and basis ranks.
%and should report memory use and wall time alongside reconstruction and uncertainty measures.

Overall, DFT-HSGP is a useful spatial computational strategy for regularly
observed space-time Gaussian processes when the sampled spectral multiplier
can be evaluated directly, numerical spatial inversion is costly, and an
adequate basis is substantially smaller than the set of fitting sites.  The
numerical studies identify both a setting in which these conditions produce a
clear reduction in fitting time and settings in which full-rank fitting or
spatial thinning is more effective.  The method should therefore be selected
according to the spectral representation, the required spatial rank, and the
computational alternatives available for the application.

\section*{Acknowledgements}
This work was supported by Kempe Stiftelserna project JCSMK23-0168. The computations used resources provided by the National Academic Infrastructure for Supercomputing in Sweden (NAISS) at the PDC Center for High Performance Computing, KTH Royal Institute of Technology, which is  partially funded by the Swedish Research Council through grant agreement no. 2022-06725. Jia Li's research is supported by the National Science Foundation under grant CCF-2205004.

\appendix
\section{Temporal sampling and finite-record Fourier calculations}
\label{app:temporal}

\subsection{Fourier convention and temporal sampling}

For a function $g$ of $q$ Euclidean coordinates, the unitary Fourier pair is
\begin{equation}
 \widehat g(\bm\upsilon)=(2\pi)^{-q/2}
 \int e^{-\ii\bm\upsilon^\trans\bm x}g(\bm x)\,\mathrm d\bm x,
 \qquad
 g(\bm x)=(2\pi)^{-q/2}
 \int e^{\ii\bm\upsilon^\trans\bm x}\widehat g(\bm\upsilon)\,\mathrm d\bm\upsilon.
 \label{eq:unitary-fourier-pair}
\end{equation}
For a stationary spatial kernel $k$, the covariance-spectrum normalization is
\[
 k(\bm x)=(2\pi)^{-q}\int e^{\ii\bm\upsilon^\trans\bm x}
 S(\bm\upsilon)\,\mathrm d\bm\upsilon,
 \qquad S=(2\pi)^{q/2}\widehat k.
\]
Thus
$S_{\mathrm c}=(2\pi)^{d/2}f$ and
$S_{\mathrm d}=(2\pi)^{d/2}f_{\mathrm d}$ under the conventions of the main
text.

Partitioning the continuous inverse temporal transform into intervals of
width $2\pi/\Delta t$ gives, with
$\tau_m=(\lambda+2\pi m)/\Delta t$,
\begin{align}
 H_{\mathrm d}(\bm h,\lambda;\bm\theta)
 &=\frac1{\Delta t}\sum_{m\in\mathbb Z}
 H_{\mathrm c}(\bm h,\tau_m;\bm\theta),
 \label{eq:aliasing-identity}\\
 f_{\mathrm d}(\bm\omega,\lambda;\bm\theta)
 &=\frac1{\Delta t}\sum_{m\in\mathbb Z}
 f(\bm\omega,\tau_m;\bm\theta),\qquad
 S_{\mathrm d}(\bm\omega,\lambda;\bm\theta)
 =\frac1{\Delta t}\sum_{m\in\mathbb Z}
 S_{\mathrm c}(\bm\omega,\tau_m;\bm\theta).
 \label{eq:sampled-multiplier-alias}
\end{align}
The identities hold when the required sum--integral interchanges are
justified by integrable spectral mass.  A directly specified sampled
multiplier requires neither a continuous-time precursor nor an alias sum.

\subsection{Finite-record covariance and Whittle approximation}

For sites $a$ and $b$, let
$C_{Y,ab}(\ell)=C_Z(\bm s_a-\bm s_b,\ell\Delta t;\bm\theta)
+v_{\mathrm n}\mathbb I(\ell=0)\mathbb I(a=b)$ and define the observation
spectrum on the data-DFT covariance scale by
\begin{equation}
 F_{ab}(\lambda)=\sqrt{2\pi}\,
 H_{\mathrm d}(\bm s_a-\bm s_b,\lambda;\bm\theta)
 +v_{\mathrm n}\mathbb I(a=b),qquad
 C_{Y,ab}(\ell)=\frac1{2\pi}\int_{-\pi}^{\pi}
 F_{ab}(\lambda)e^{\ii\lambda\ell}\,\mathrm d\lambda.
 \label{eq:observation-covariance-inversion}
\end{equation}
Writing $\bm B_{T,kk'}=\operatorname{Cov}(\bm d_k,\bm d_{k'})$ and
$\mathcal D_T(x)=\sum_{j=0}^{T-1}e^{-\ii xj}$, direct transformation of the
finite record yields
\begin{equation}
 [\bm B_{T,kk'}]_{ab}
 =\frac1{2\pi T}\int_{-\pi}^{\pi}F_{ab}(\lambda)
 \mathcal D_T(\lambda_k-\lambda)
 \overline{\mathcal D_T(\lambda_{k'}-\lambda)}\,\mathrm d\lambda.
 \label{eq:exact-dft-window-covariance}
\end{equation}
In particular, if
$\mathcal K_T(x)=|\mathcal D_T(x)|^2/(2\pi T)$ is the normalized Fej\'er
kernel, then
\begin{equation}
 [\bm B_{T,kk}]_{ab}
 =\int_{-\pi}^{\pi}F_{ab}(\lambda)
 \mathcal K_T(\lambda_k-\lambda)\,\mathrm d\lambda
 =\sum_{|\ell|<T}\left(1-\frac{|\ell|}{T}\right)
 C_{Y,ab}(\ell)e^{-\ii\lambda_k\ell}.
 \label{eq:fejer-smoothed-spectrum}
\end{equation}
Hence an ordinary finite record generally has both spectral-window smoothing
on the diagonal and nonzero cross-frequency covariance.  The Whittle working
model replaces the diagonal blocks by $\bm F(\lambda_k)$ and discards the
off-diagonal blocks.  For example, with fixed $n$, weighted absolute
summability of the temporal covariance sequence and a spectrum bounded away
from singularity imply
\begin{equation}
 \sup_k\|\bm B_{T,kk}-\bm F(\lambda_k)\|\longrightarrow0,
 \qquad
 \sup_{k\ne k'}\|\bm B_{T,kk'}\|\longrightarrow0.
 \label{eq:whittle-diagonal-limit}
\end{equation}
This statement supports frequency decoupling for fixed collections of
coefficients; it is not an estimator-accuracy result for a likelihood whose
dimension grows with $T$ or $n$.

Exact frequency decoupling instead follows from the explicitly circular
covariance
\begin{equation}
 \bm\Gamma_Y^{\mathrm{circ}}(\ell)
 =\frac1T\sum_{k=0}^{T-1}\bm F(\lambda_k)e^{\ii\lambda_k\ell},
 \qquad
 \mathbb E(\bm d_k\bm d_{k'}^\herm)
 =\bm F(\lambda_k)\mathbb I(k=k').
 \label{eq:periodic-dft-covariance}
\end{equation}
Interior nonredundant coefficients are then proper complex Gaussian and
independent, while zero and Nyquist coefficients are real Gaussian and
negative frequencies are determined by conjugate symmetry.  The circular
covariance is a distinct model, rather than the covariance of an ordinary
finite record restricted to $T$ observations.

\section{Spatial operator and HSGP construction}
\label{app:spatial}

Propositions~\ref{prop:continuous-spatial-symbol} and
\ref{prop:spatial-symbol} establish that the continuous and sampled spatial
covariance operators are Fourier multipliers.  For the sampled operator,
\begin{equation}
 \widehat{\mathcal K^{\mathrm d}_{\lambda,\bm\theta}g}(\bm\omega)
 =S_{\mathrm d}(\bm\omega,\lambda;\bm\theta)\widehat g(\bm\omega).
 \label{eq:sampled-multiplier-action}
\end{equation}
A nonnegative, finite-almost-everywhere symbol defines a positive
self-adjoint operator on the domain
$\{g\in L^2:S_{\mathrm d}\widehat g\in L^2\}$; if the symbol is essentially
bounded, the operator is bounded on all of $L^2$.  This multiplier definition
agrees with convolution by the half-spectrum whenever the ordinary convolution
integral exists.

For a radial symbol
$S_{\mathrm d}(\bm\omega,\lambda;\bm\theta)
=\zeta_{\mathrm d}(\|\bm\omega\|,\lambda;\bm\theta)$, the whole-space
Laplacian has Fourier multiplier $\|\bm\omega\|^2$.  Spectral calculus
therefore gives
\begin{equation}
 \mathcal K^{\mathrm d}_{\lambda,\bm\theta}
 =\zeta_{\mathrm d}(\sqrt{-\Delta_{\bm s}},\lambda;\bm\theta).
 \label{eq:functional-calculus}
\end{equation}
HSGP transfers the same scalar function to the spectrum of a Dirichlet
Laplacian on the extended rectangle $\Omega$ in
\eqref{eq:extended-domain}.  Solving its coordinatewise eigenproblem gives
\begin{align}
 \phi_{\bm j}(\bm s)
 &=\prod_{r=1}^d L_r^{-1/2}
 \sin\!\left[\frac{\pi j_r}{2L_r}(s_r-c_r+L_r)\right],
 \label{eq:dirichlet-basis}\\
 \Lambda_{\bm j}
 &=\sum_{r=1}^d\left(\frac{\pi j_r}{2L_r}\right)^2,
 \qquad \omega_{\bm j}=\sqrt{\Lambda_{\bm j}}.
 \label{eq:laplacian-eigenvalue}
\end{align}
The basis is orthonormal with respect to integration over $\Omega$; its
values at observation sites need not form an orthogonal matrix.  Applying
$\zeta_{\mathrm d}$ to these eigenvalues and retaining
$\bm j\in\mathcal J$ produces the finite kernel in
\eqref{eq:hsgp-general-expansion}.

This bounded-domain spectral operator is generally different from the target
integral operator restricted to $\Omega\times\Omega$.  Consequently, the
Laplacian modes are not generally the Mercer eigenfunctions of that restricted
kernel, and increasing $M$ at fixed $\Omega$ removes neither the imposed
boundary condition nor its associated error.  Approximation to the whole-space
kernel requires both a receding boundary and an increasing frequency cutoff,
together with suitable regularity and spectral-tail conditions
\cite[Sec.~4]{solinsarkka2020}.  For a tensor truncation,
$M=\prod_rJ_r$ and $\omega_{\max,r}=\pi J_r/(2L_r)$; thus one sufficient
scaling regime has $L_r\to\infty$ and $J_r/L_r\to\infty$ in every coordinate.

A fixed geometric anisotropy can be incorporated without changing this
argument.  For a nonsingular spatial transformation $\bm B$, let
\begin{equation}
 H_{\mathrm d}^{\bm B}(\bm h,\lambda)
 =H_{\mathrm d}^{0}(\bm B\bm h,\lambda),\qquad
 S_{\mathrm d}^{\bm B}(\bm\omega,\lambda)
 =|\det\bm B|^{-1}
 \zeta_{\mathrm d}^{0}(\|\bm B^{-\trans}\bm\omega\|,\lambda).
 \label{eq:geometric-anisotropy}
\end{equation}
Constructing the rectangular basis in transformed coordinates
$\bm x=\bm B\bm s$ gives
\begin{equation}
 H_{\mathrm d}^{\bm B}(\bm s-\bm s',\lambda)
 \approx\sum_{\bm j\in\mathcal J}
 \zeta_{\mathrm d}^{0}(\sqrt{\Lambda_{\bm j}},\lambda)
 \phi_{\bm j}(\bm B\bm s)\phi_{\bm j}(\bm B\bm s').
 \label{eq:anisotropic-hsgp}
\end{equation}
The geometry remains cacheable when $\bm B$ is fixed.  Estimating a
frequency-dependent transformation would generally remove the shared-basis
advantage and is outside the reported experiments.

\section{Likelihood, computation, and reconstruction details}
\label{app:inference}

At positive interior frequencies the proper complex Gaussian density is
$\pi^{-n}|\bm F_k|^{-1}
\exp(-\bm d_k^\herm\bm F_k^{-1}\bm d_k)$; the zero and Nyquist terms use
the corresponding real Gaussian density.  Their product over the
nonredundant frequencies gives \eqref{eq:whittle-likelihood}, up to a
parameter-independent change-of-variables constant.

For a fitted mean, let
$\mathbb E(\bm y_j)=\bm X_j\bm\xi$,
$\bm X_k^D=T^{-1/2}\sum_j\bm X_je^{-\ii\lambda_kj}$, and
$\bm r_k(\bm\xi)=\bm d_k-\bm X_k^D\bm\xi$.  Replacing $\bm d_k$ by
$\bm r_k$ in the likelihood gives the real information matrix and score
\begin{align}
 \bm{\mathcal I}_\xi
 &=\sum_{k\in\mathcal K_R}(\bm X_k^D)^\trans\bm F_k^{-1}\bm X_k^D
 +2\Re\sum_{k\in\mathcal K_+}(\bm X_k^D)^\herm\bm F_k^{-1}\bm X_k^D,
 \notag\\
 \bm g_\xi
 &=\sum_{k\in\mathcal K_R}(\bm X_k^D)^\trans\bm F_k^{-1}\bm d_k
 +2\Re\sum_{k\in\mathcal K_+}(\bm X_k^D)^\herm\bm F_k^{-1}\bm d_k.
 \label{eq:mean-information}
\end{align}
Thus $\widehat{\bm\xi}=\bm{\mathcal I}_\xi^{-1}\bm g_\xi$ and, up to
parameter-independent constants,
\begin{equation}
 \mathcal L_{\mathrm{prof}}
 =\mathcal L_0-\tfrac12\bm g_\xi^\trans
 \bm{\mathcal I}_\xi^{-1}\bm g_\xi,
 \qquad
 \mathcal L_{\mathrm{REML}}
 =\mathcal L_{\mathrm{prof}}+\tfrac12\log|\bm{\mathcal I}_\xi|.
 \label{eq:whittle-reml}
\end{equation}
The analytic full-rank and HSGP fits apply the same mean treatment but use
their respective covariance matrices and fitted parameters.

For HSGP, write $\bm G=\bm\Phi^\trans\bm\Phi$,
$\bm b_k=\bm\Phi^\trans\bm d_k$, and
$u_k=\bm d_k^\herm\bm d_k$ as in \eqref{eq:cached-gram}.  A formulation that
also permits zero spectral weights uses
\begin{equation}
 \bm C_k=\bm I_M+v_{\mathrm n}^{-1}
 \bm W_k^{1/2}\bm G\bm W_k^{1/2},
 \qquad \bm z_k=\bm W_k^{1/2}\bm b_k.
 \label{eq:sqrt-system}
\end{equation}
The determinant lemma and Woodbury identity then give
\begin{equation}
 \log|\widetilde{\bm F}_k|
 =n\log v_{\mathrm n}+\log|\bm C_k|,
 \qquad
 \bm d_k^\herm\widetilde{\bm F}_k^{-1}\bm d_k
 =v_{\mathrm n}^{-1}u_k-v_{\mathrm n}^{-2}
 \bm z_k^\herm\bm C_k^{-1}\bm z_k.
 \label{eq:cached-feature-formulas}
\end{equation}
Mean-design projections can be cached in the same manner.  Observation-space
factorization may be preferable when $M$ is not smaller than $n$.

Let $\bm V_{\ast,k}^{Z}$ and $\bm V_{\ast,k}^{Y}$ denote the latent-field
and noisy-observation conditional covariance matrices at the reconstruction
sites.  For disjoint fitting and reconstruction sites with independent
measurement noise,
\begin{equation}
 \bm V_{\ast,k}^{Y}=\bm V_{\ast,k}^{Z}+v_{\mathrm n}\bm I_{n_\ast}.
 \label{eq:latent-observation-variance}
\end{equation}
After restoring conjugate frequencies, the working conditional covariance
between time indices $j$ and $j'$ is, for $a\in\{Z,Y\}$,
\begin{equation}
 \operatorname{Cov}(\bm Y_{\ast,j}^{a},\bm Y_{\ast,j'}^{a}\mid\bm y,
 \widehat{\bm\theta},\widehat{\bm\xi})
 =\frac1T\left\{
 \sum_{k\in\mathcal K_R}e^{\ii\lambda_k(j-j')}\bm V_{\ast,k}^{a}
 +2\Re\sum_{k\in\mathcal K_+}e^{\ii\lambda_k(j-j')}\bm V_{\ast,k}^{a}
 \right\}.
 \label{eq:time-variance}
\end{equation}
For an ordinary finite record these are Whittle conditional covariances.
They condition on the fitted covariance and mean parameters and do not include
covariance-parameter uncertainty.

\section{Model-specific formulas and supporting numerical results}

\subsection{Fuentes inversion and sampled multiplier}
\label{app:fuentes}

For the Fuentes spectrum, set
$A(\tau)=\alpha^2(\beta^2+\tau^2)$,
$B(\tau)=\beta^2+\epsilon\tau^2$,
$\eta=\nu-d/2$, and
$\kappa(\tau)=\{A(\tau)/B(\tau)\}^{1/2}$.  With
$\Matern_\eta(x)=2^{1-\eta}x^\eta K_\eta(x)/\Gamma(\eta)$, the spatial
inverse transform is
\begin{equation}
 \frac1{(2\pi)^{d/2}}\int_{\R^d}
 \frac{e^{\ii\bm\omega^\trans\bm h}}
 {(A+B\|\bm\omega\|^2)^\nu}\,\mathrm d\bm\omega
 =\frac{\Gamma(\nu-d/2)}{2^{d/2}\Gamma(\nu)}
 A^{d/2-\nu}B^{-d/2}
 \Matern_\eta(\kappa\|\bm h\|).
 \label{eq:joint-to-half-spectrum}
\end{equation}
After absorbing the common gamma factor into the positive amplitude
normalizer, write
$p(\tau)=A(\tau)^{d/2-\nu}B(\tau)^{-d/2}$ and
$H_{c,\mathrm{raw}}(\bm h,\tau)=p(\tau)
\Matern_\eta\{\kappa(\tau)\|\bm h\|\}$.  For alias cutoff $Q$, let
$\tau_m=(\lambda+2\pi m)/\Delta t$,
$p_{m\lambda}=\Delta t^{-1}p(\tau_m)$, and
$\kappa_{m\lambda}=\kappa(\tau_m)$.  The normalized sampled half-spectrum is
\begin{equation}
 H_{\mathrm d}^{(Q)}(\bm h,\lambda;\bm\theta)
 =v_{\mathrm f}a_{Q,\Delta t}(\bm\theta)
 \sum_{m=-Q}^{Q}p_{m\lambda}
 \Matern_\eta(\kappa_{m\lambda}\|\bm h\|),
 \qquad
 a_{Q,\Delta t}^{-1}
 =\frac1{\sqrt{2\pi}}\int_{-\pi}^{\pi}
 \sum_{m=-Q}^{Q}p_{m\lambda}\,\mathrm d\lambda.
 \label{eq:discrete-half-spectrum}
\end{equation}
This normalization makes
$\int_{-\pi}^{\pi}H_{\mathrm d}^{(Q)}(\bm0,\lambda)
\,\mathrm d\lambda/\sqrt{2\pi}=v_{\mathrm f}$.

Define the unit-integral Mat\'ern density
\begin{equation}
 \rho_\eta(\bm\omega;\kappa)
 =\frac{\Gamma(\eta+d/2)}{\Gamma(\eta)\pi^{d/2}}
 \frac{\kappa^{2\eta}}
 {(\|\bm\omega\|^2+\kappa^2)^{\eta+d/2}}.
 \label{eq:normalized-matern-density}
\end{equation}
The sampled spatial multiplier corresponding to
\eqref{eq:discrete-half-spectrum} is
\begin{equation}
 S_{\mathrm d}^{(Q)}(\bm\omega,\lambda;\bm\theta)
 =v_{\mathrm f}a_{Q,\Delta t}(\bm\theta)(2\pi)^d
 \sum_{m=-Q}^{Q}p_{m\lambda}
 \rho_\eta(\bm\omega;\kappa_{m\lambda}).
 \label{eq:frequency-specific-spatial-spectrum}
\end{equation}
The analytic full-rank reference and HSGP therefore use the same normalized
finite-alias model; they differ only in the spatial covariance calculation.

\subsection{Fuentes parameter and covariance diagnostics}
\label{app:fuentes-supporting-diagnostics}

Here ``reference'' denotes the analytic full-rank Fuentes covariance
calculation, not known parameters.  Table~\ref{tab:parameter-recovery}
summarizes the fitted parameters, while
Table~\ref{tab:fuentes-covariance-errors} compares the resulting analytic
full-rank latent covariance stacks.

\begin{table}[htbp]
\centering
\scriptsize
\setlength{\tabcolsep}{3.5pt}
\caption{Median parameter estimates over 20 paired replicates.  Fixed-noise
fits hold $v_{\mathrm n}$ at its generating value; free-noise fits estimate it
jointly with $(\alpha,\beta,\epsilon)$.}
\label{tab:parameter-recovery}
\begin{tabular}{@{}cllrrrr@{}}
\toprule
Configuration & Fitting regime & Spatial calculation & $\alpha$ & $\beta$ & $\epsilon$ & $v_{\mathrm n}$\\
\midrule
\multirow{5}{*}{A} & Generating value & -- & 1.500 & 0.350 & 0.350 & 0.0500\\
 & Fixed $v_{\mathrm n}$ & Reference & 1.498 & 0.347 & 0.333 & 0.0500\\
 & Fixed $v_{\mathrm n}$ & HSGP & 1.721 & 0.368 & 0.464 & 0.0500\\
 & Free $v_{\mathrm n}$ & Reference & 1.499 & 0.348 & 0.333 & 0.0499\\
 & Free $v_{\mathrm n}$ & HSGP & 1.709 & 0.369 & 0.463 & 0.0515\\
\addlinespace
\multirow{5}{*}{B} & Generating value & -- & 5.000 & 0.350 & 0.350 & 0.0500\\
 & Fixed $v_{\mathrm n}$ & Reference & 5.038 & 0.354 & 0.372 & 0.0500\\
 & Fixed $v_{\mathrm n}$ & HSGP & 5.557 & 0.370 & 0.468 & 0.0500\\
 & Free $v_{\mathrm n}$ & Reference & 5.038 & 0.355 & 0.373 & 0.0500\\
 & Free $v_{\mathrm n}$ & HSGP & 5.532 & 0.370 & 0.487 & 0.0521\\
\bottomrule
\end{tabular}
\end{table}

The covariance error below excludes the nugget and measures displacement of
the fitted analytic full-rank covariance stack from the generating stack.  It
is therefore not a fixed-parameter HSGP truncation error.

\begin{table}[htbp]
\centering
\scriptsize
\caption{Mean relative Frobenius error between analytic full-rank latent
frequency-covariance stacks over 20 paired replicates.  Intervals are paired
bootstrap 95\% intervals for the Monte Carlo mean.}
\label{tab:fuentes-covariance-errors}
\begin{tabular}{@{}clcc@{}}
\toprule
Configuration & Covariance comparison & Mean error (\%) & 95\% interval (\%)\\
\midrule
\multirow{3}{*}{A}
 & Reference fit versus generating value & 2.74 & [1.96, 3.60]\\
 & HSGP fit versus generating value & 6.47 & [5.70, 7.20]\\
 & HSGP fit versus reference fit & 6.09 & [5.29, 6.88]\\
\addlinespace
\multirow{3}{*}{B}
 & Reference fit versus generating value & 2.16 & [1.71, 2.65]\\
 & HSGP fit versus generating value & 9.99 & [9.32, 10.63]\\
 & HSGP fit versus reference fit & 8.90 & [8.52, 9.24]\\
\bottomrule
\end{tabular}
\end{table}

\subsection{Response-free audit for the directly specified multiplier}
\label{app:direct-spectral-audit}

The basis in Section~\ref{sec:direct-spectral-experiment} was selected without
using simulated responses.  Boundary factors $1.5$, $2$, and $3$ and ranks
$25$, $49$, $81$, $121$, and $169$ were assessed over a prespecified
parameter--frequency box.  The smallest candidate satisfying the audit
criteria was $(c_B,M)=(2,121)$.  Its mean and worst 90th-percentile normalized
covariance errors were $0.00480$ and $0.01978$, and its mean and worst relative
Frobenius errors were $0.01879$ and $0.05716$, respectively.

\subsection{Matched spatial-budget comparison for CERRA}
\label{app:cerra-matched}

Table~\ref{tab:cerra-matched} compares full-grid HSGP fits with analytic
full-rank fits on space-filling subsets.  The equality $M=n_c$ matches only
the two numerical dimensions, not their algebraic complexity.  Times are
totals over completed multistart optimizations and are specific to the study
implementation and platform.

\begin{table}[htbp]
\centering
\scriptsize
\setlength{\tabcolsep}{3pt}
\caption{Matched spatial-budget CERRA comparison.  RMSE is pooled over the
same 300 held-out sites.  The final column gives the spatial-block-bootstrap
mean sitewise difference, HSGP minus analytic full-rank reference, with a
95\% interval.}
\label{tab:cerra-matched}
\begin{tabular}{@{}rrrrrrr@{}}
\toprule
$T$ & $M=n_c$ & HSGP RMSE & HSGP time & Reference RMSE & Reference time & Difference [95\% interval]\\
\midrule
124 & 250 & 0.3013 & 0.130 & 0.2545 & 0.335 & 0.0501 [0.0401, 0.0598]\\
124 & 500 & 0.2139 & 0.340 & 0.1323 & 1.604 & 0.0927 [0.0774, 0.1073]\\
360 & 250 & 0.2589 & 0.358 & 0.2239 & 1.021 & 0.0359 [0.0286, 0.0432]\\
360 & 500 & 0.1816 & 0.681 & 0.1097 & 4.420 & 0.0805 [0.0688, 0.0924]\\
\bottomrule
\end{tabular}
\end{table}

Additional derivations, implementation details, diagnostic definitions, and
supporting figures are provided in the accompanying GitHub repository.
% [\emph{permanent repository URL or archived DOI to be inserted}].

\end{document}